\renewcommand{\section}{\@startsection%
{section}%
{1}%
{0em}%
{1.7em}%
{1.2em}%
{\normalfont\large\centering\bfseries}}
\renewcommand{\@seccntformat}[1]%
{\csname the#1\endcsname.\hspace{0.5em}}
\numberwithin{equation}{section}
\newtheorem{theorem}{Theorem}[section]
\newtheorem{proposition}[theorem]{Proposition}
\newtheorem{lemma}[theorem]{Lemma}
\newtheorem{corollary}[theorem]{Corollary}
\theoremstyle{definition}
\newtheorem{definition}[theorem]{Definition}
\newtheorem{remark}[theorem]{Remark}
\newtheorem*{notation}{Notation}
\newcommand{\abs}[1]{\left|#1\right|}
\newcommand{\norm}[1]{\left\|#1\right\|}
\newcommand{\inner}[2]{\left\langle#1,#2\right\rangle}
\newcommand{\cc}[1]{\overline{#1}}
\newcommand{\reals}{\mathbb{R}}
\newcommand{\nats}{\mathbb{N}}
\newcommand{\complex}{\mathbb{C}}
\newcommand{\eval}[1]{\upharpoonright_{#1}}
\DeclareMathOperator{\im}{Im}
\DeclareMathOperator{\dom}{dom}
\DeclareMathOperator{\ran}{ran}
\DeclareMathOperator{\rank}{rank}
\DeclareMathOperator{\Span}{span}
\DeclareMathOperator{\card}{\#}
\DeclareMathOperator{\supp}{supp}
\DeclareMathOperator{\ind}{ind}
\begin{document}
\begin{titlepage}
\title{Stability of determinacy and inverse spectral problems for
  Jacobi operators
\footnotetext{%
Mathematics Subject Classification(2010):
41A10  
47B36, 
33E30  
}
\footnotetext{%
Keywords:
Index of determinacy;
Density of polynomials;
Green function;
Jacobi operators.
}
\hspace{-8mm}
\thanks{%
Research partially supported by UNAM-DGAPA-PAPIIT IN105414
}%
}
\author{
\textbf{Rafael del Rio}
\\
\small Departamento de F\'{i}sica Matem\'{a}tica\\[-1.6mm]
\small Instituto de Investigaciones en Matem\'aticas Aplicadas y en Sistemas\\[-1.6mm]
\small Universidad Nacional Aut\'onoma de M\'exico\\[-1.6mm]
\small C.P. 04510, M\'exico D.F.\\[-1.6mm]
\small \texttt{delrio@iimas.unam.mx}
\\[2mm]
\textbf{Luis O. Silva\thanks{Parcially supported by SEP-CONACYT 254062}}
\\
\small Departamento de F\'{i}sica Matem\'{a}tica\\[-1.6mm]
\small Instituto de Investigaciones en Matem\'aticas Aplicadas y en Sistemas\\[-1.6mm]
\small Universidad Nacional Aut\'onoma de M\'exico\\[-1.6mm]
\small C.P. 04510, M\'exico D.F.\\[-1.6mm]
\small \texttt{silva@iimas.unam.mx}
}
\date{}
\maketitle
\vspace{-4mm}
\begin{center}
\begin{minipage}{5in}
  \centerline{{\bf Abstract}} \bigskip
  This work studies the interplay between Green functions, the index
  of determinacy of spectral measures and interior finite rank
  perturbations of Jacobi operators. The index of determinacy
  quantifies the stability of uniqueness of solutions of the moment
  problem. We give results on the constancy of this index in terms of
  perturbations of the corresponding Jacobi operators. The permanence
  of the $N$-extremality of a measure is also studied. A measure $\mu$
  is $N$-extremal when the polynomials are dense in
  $L_2(\reals,\mu)$. As a by-product, we give a characterization of
  the index in terms of cyclic vectors. We consider a new inverse
  problem for Jacobi operators in which information on the place where
  the interior perturbation occurs is obtained from the index of
  determinacy.
\end{minipage}
\end{center}
\thispagestyle{empty}
\end{titlepage}
\section{Introduction}
\label{sec:intro}

Given a sequence $\{s_k\}_{k=0}^\infty$ of real numbers, the problem
of finding a Borel measure $\mu$ in $\reals$ such that
\begin{equation*}
  s_k=\int_\reals t^kd\mu\quad \text{ for all } k=0,1,2,\dots
\end{equation*}
is called the Hamburger moment problem.

Denote by $\mathcal{M}$ the set of Borel measures on $\reals$ with
infinite support and all their moments finite. For a positive sequence
$\{s_k\}_{k=0}^\infty$ (see definition in \cite[Chap.\,1,
Sec\,1]{MR0184042}), the corresponding Hamburger moment problem has
always a solution $\mu\in\mathcal{M}$
\cite[Thm.\,2.1.1]{MR0184042}. $\mathcal{M}$ splits into two sets, one
corresponding to the case when the Hamburger moment problem has a
unique solution and the other when it has various solutions.  In the
first case, it is said that the moment problem is determinate,
whereas, in the second case, the problem is called indeterminate. If a
moment problem is determinate (indeterminate), then the corresponding
solution, that is the measure, is also called determinate
(indeterminate).

The problem of finding conditions under which a relevant class of
functions is dense in the spaces $L_p(\reals, \mu)$ is classical in
analysis. In particular, conditions which guarantee density of
polynomials go back at least to the work of Hamburger
\cite{MR1511981}. For related work see for example \cite{MR0084064,
  MR1217081, MR2562213, MR2172989, MR1842874, MR2071709, MR2214480,
  MR2415042, MR638619, MR1871391, MR1650247} (see in
\cite[Sec.\,4.8]{MR1864396} a brief compilation of results on the
matter). A fundamental result characterizing the measures
$\mu\in\mathcal{M}$ for which the polynomials are dense in
$L_2(\reals,\mu)$ is due to M. Riesz \cite[Thm.\,2.3.2]{MR0184042},
\cite{mriesz1923}. It establishes that for the polynomials to be dense
in $L_2(\reals,\mu)$ it is necessary and sufficient that $\mu$ be
$N$-extremal (see definition in \cite[Pag.\,43]{MR0184042}). In
\cite[Pag.\,86]{MR1627806}, $N$-extremal solutions are called von
Neumann solutions whereas in \cite[Pag.\,2796]{MR1308001} $N$-extremal means
Nevanlinna extremal. Note that in contrast to the definition given in
\cite{MR1308001}, here all determinate solutions are $N$-extremal
\cite[Cor.\,2.3.3]{MR0184042}.

A concept related to the determinacy and $N$-extremality of a measure
is the concept of the \emph{index of determinacy} introduced by Berg
and Dur\'{a}n in \cite{MR1308001}. The index of determinacy of
$\mu\in\mathcal{M}$ quantifies the stability of $\mu$ to be the unique
solution of a moment problem under perturbations of it (see
Definition~\ref{def:index}). This index also gives information on how
a measure can be perturbed and maintain the property of being
$N$-extremal. The fact that a measure $\mu\in\mathcal{M}$ is
indeterminate $N$-extremal or determinate may be changed by adding or
substracting the mass at only one point (see
Proposition~\ref{prop:duran} below) or by modifying the weights
without changing the support (see Proposition~\ref{prop:changing-weights}).

Jacobi operators, i.\,e., self-adjoint extensions of operators having
a tridiagonal matrix representation (see \eqref{eq:jm-0}), naturally
appear in the theory of the Hamburger moment problem. It turns out
that every $N$-extremal solution of a Hamburger moment problem
normalized so that $s_0=1$ is the spectral measure of a Jacobi
operator (see Theorem~\ref{thm:nec-suf-cond-for-meausure}). Thus, the
study of measures $\mu$ such that the polynomials are dense in
$L_2(\reals,\mu)$ is the study of self-adjoint extensions of operators
having a semi-infinite Jacobi matrix as its matrix representation.

We study the constancy of the index of determinacy after changing the
weights and support of the measure. The permanence of the
$N$-extremality of a given measure is also considered. Similar
questions on stability are studied in \cite{MR3407921} and
\cite{MR1754999}.  Our approach to this matter is mainly based on
Jacobi operators and Green functions.  This allows us to give
results on the stability of the index of determinacy of the spectral
measure of a Jacobi operator under finite rank perturbations of the
operator. Jacobi operators model linear mass-spring systems and the
perturbations considered here correspond to changing one mass and
spring constant in some place of the chain.

Our findings on the stability of the index of determinacy and the
$N$-ex\-trem\-ality of the spectral measures of Jacobi operators shed
light on the inverse spectral problem of reconstructing an operator
from its spectrum and the spectrum of an interior finite rank
perturbation of it. It turns out that the aforehand knowledge of the
index of determinacy of the spectral measure of the Jacobi operator
determines the place where the interior perturbation
occurs. Remarkably, for finite Jacobi matrices, as well as in the case
of infinite index of determinacy, one cannot recover the place of the
perturbation.

As a by-product of our research, we give a new characterization of the
index of determinacy in terms of the cyclicity of vectors generated by
polynomials functions of Jacobi operators.

The paper is organized as follows. In the next section we give some
preparatory facts on Jacobi operators. In
Section~\ref{sec:green-functions}, the Weyl and Green functions
associated to Jacobi operators are introduced and we prove a criterion
for a Green function to be a Weyl function
(Theorem~\ref{thm:equivalence-green-weyl}). This result is interpreted
later in terms of the index of determinacy
(Corollary~\ref{cor:index-weyl-function}). Section
\ref{sec:stability-index-determinacy} presents a characterization of
the index of determinacy (Corollary~\ref{cor:other-criterion}) and
establishes stability results for the index. We provide conditions for
two measures with the same support and different weights to have the
same index (Theorems~\ref{thm:main-minimal-equal-indices} and
\ref{thm:stability-determ-index-weights}).  Moreover, conditions for
two measures with different supports to have the same index are found
(Corollary~\ref{cor:finite-rank-pert-index-determinacy}). We show that
finite-rank perturbations of Jacobi operators do not modify the index
of determinacy of the corresponding
measures. Section~\ref{subsec:sol-inv-problem} presents a new
development in the inverse spectral analysis of interior perturbations
of Jacobi operators. We consider a  two-spectra inverse problem
where the information of the index of determinacy is given in
advance. This section connects the results of previous sections to the
inverse spectral problem studied in \cite{MR3634444}.  To the
best of our knowledge, this is the first time that the index of
determinacy is used in inverse spectral theory.

\section{Jacobi operators}
\label{sec:jacobi_operators}

For a sequence $f=\{f_k\}_{k=1}^\infty$ of complex numbers, consider the
second order difference expressions
\begin{subequations}
  \label{eq:difference-expr}
\begin{align}
   \label{eq:difference-recurrence}
  (\Upsilon f)_k&:= b_{k-1}f_{k-1} + q_k f_k + b_kf_{k+1}
  \quad k \in \mathbb{N} \setminus \{1\},\\
   \label{eq:difference-initial}
   (\Upsilon f)_1&:=q_1 f_1 + b_1 f_2\,,
\end{align}
\end{subequations}
where $q_k\in\mathbb{R}$ and $b_k>0$ for any $k\in\mathbb{N}=\{1,2,\dots\}$. We
remark that (\ref{eq:difference-initial}) can be seen as a boundary
condition.

\begin{definition}
  Let $l_2(\mathbb{N})$ be the space of square summable complex
  sequences. In this Hilbert space, define the operator $J_0$ whose
  domain is the set of sequences having a finite number of
  non-zero elements and is given by $J_0f:=\Upsilon
  f$.\label{def:j-nought}
\end{definition}
Clearly, the operator
$J_0$ is symmetric and therefore closable, so one can consider the
operator $\cc{J_0}$ being its closure. By the definition of the matrix
representation of an unbounded symmetric operator given in \cite[Sec. 47]{MR1255973},
$\overline{J_0}$ is the operator whose matrix representation with
respect to the canonical basis $\{e_n\}_{n=1}^\infty$ in
$l_2(\mathbb{N})$ is
\begin{equation}
  \label{eq:jm-0}
  \begin{pmatrix}
    q_1 & b_1 & 0  &  0  &  \cdots
\\[1mm] b_1 & q_2 & b_2 & 0 & \cdots \\[1mm]  0  &  b_2  & q_3  &
b_3 &  \\
0 & 0 & b_3 & q_4 & \ddots\\ \vdots & \vdots &  & \ddots
& \ddots
  \end{pmatrix}\,.
\end{equation}
Recall that the element $e_n$ of the canonical basis is the sequence
whose elements are zero except for the $n$-th entry which is 1.  Thus
$\cc{J_0}$ is the minimal closed symmetric operator such that
$\inner{e_j}{\cc{J_0}e_k}$ is the $j,k$ entry of the matrix above.

\begin{remark}
\label{rem:deficiency-indices}
The deficiency indices of the symmetric operator $\cc{J_0}$ are either
$(1,1)$ or $(0,0)$ \cite[Chap.\,4,\,Sec.\,1.2]{MR0184042},
\cite[Chap.\,7 Thm.\,1.1]{MR0222718}. When
$\cc{J_0}$ has deficiency indices $(1,1)$, respectively $(0,0)$, the
matrix \eqref{eq:jm-0} is said to be in the limit circle case,
respectively limit point case \cite[Def.\,1.3.2]{MR0184042}. Thus, if $J$ is a self-adjoint
extension of $J_0$,then  either $J$ is a proper closed
symmetric extension of $\cc{J_0}$ or $J=\overline{J_0}$.
\end{remark}
\begin{definition}
 \label{def:jacobi-operator}
 Given the matrix \eqref{eq:jm-0},
we consider $J$ a fixed self-adjoint extension of $J_0$ and refer to it as the
Jacobi operator associated with \eqref{eq:jm-0}.
\end{definition}
When \eqref{eq:jm-0}
is in the limit circle case, there are more than one Jacobi operators
associated with the matrix \eqref{eq:jm-0}.

By setting $f_1=1$, a solution of the equations
\begin{subequations}
\label{eq:spectral-equation}
\begin{align}
  \label{eq:initial-spectral}
  (\Upsilon f)_1&:= zf_1\,,\\
  \label{eq:recurrence-spectral}
  (\Upsilon f)_k&:= z f_k\,,
  \quad k \in \mathbb{N} \setminus \{1\},
\end{align}
\end{subequations}
can be found uniquely by recurrence. This solution, denoted by
\begin{equation}
\label{eq:pi}
\pi(z)=\{\pi_k(z)\}_{k=1}^\infty \,,
\end{equation}
is such that $\pi_k(z)$ is a polynomial of
degree $k-1$.
The elements of the sequence $\pi(z)$
are referred to as the polynomials of the first
kind associated to the matrix
(\ref{eq:jm-0}). By comparing (\ref{eq:difference-expr}) with
(\ref{eq:spectral-equation}), one concludes that for $\pi(z)$ to be in
$\ker(J_0^*-zI)$, it is necessary and sufficient that $\pi(z)$ be an
element of $l_2(\nats)$. Of course, $\pi(z)\in\ker(J-zI)$, if and only
if $\pi(z)\in\dom(J)$.

Observe that
\begin{align*}
  J e_k&= b_{k-1}e_{k-1} + q_k e_k + b_ke_{k+1}
  \quad k \in \mathbb{N} \setminus \{1\},\\
   J e_1&=q_1 e_1 + b_1 e_2\,,
\end{align*}
Thus, by the definition of $\pi_k(z)$, one has
\begin{equation}
  \label{eq:delta-k-through-delta-1}
  e_k=\pi_k(J)e_1\quad\forall k\in\nats\,.
\end{equation}
This implies that $J$ is simple and $e_1$ is a cyclic vector (see
\cite[Sec. 69]{MR1255973}). Therefore, if one defines
\begin{equation}
  \label{eq:spectral-function-def}
  \rho(t):=\inner{e_1}{E_J(t)e_1}\,,\quad t\in\reals\,,
\end{equation}
where $E_J$ is the resolution of the identity given by the spectral
theorem, then, by \cite[Sec. 69, Thm. 2]{MR1255973}), there is a
unitary map $\Phi:L_2(\reals,\rho)\to l_2(\nats)$ such that
$\Phi^{-1}J\Phi$ is the multiplication by the independent variable
defined in its maximal domain. Henceforth we identify the function
$\rho(t)$ with the Borel measure $\rho$ which it uniquely determines
and call it spectral measure of $J$ (with respect to $e_1$).
Moreover, due to \cite[Sec. 69,
Thm. 2]{MR1255973}), it follows from
(\ref{eq:delta-k-through-delta-1}) that
the function $\pi_k\eval{\reals}$ belongs to $L_2(\reals,\rho)$ for
all $k\in\nats$, i.\,e., all moments of $\rho$ are finite (see also
\cite[Thm.\,4.1.3]{MR0184042}). The equation
\eqref{eq:delta-k-through-delta-1} means that
\begin{equation}
  \label{eq:unitary-map}
  \Phi\pi_k=e_k\,,\quad\forall k\in\nats\,,
\end{equation}
which implies that the polynomials are dense in $L_2(\reals,\rho)$
since $\Phi$ is unitary. Note also that, due to
\eqref{eq:spectral-function-def}, $\int_\reals d\rho=1$ holds.

Now, assume that one is given a measure $\rho$ satisfying
$\int_\reals d\rho=1$ and such that all the polynomials are in
$L_2(\reals,\rho)$ and they are dense in this space. Consider the
operator of multiplication by the independent variable $A$ in
$L_2(\reals,\rho)$ with
\begin{equation*}
  \dom(A)=\{f\in L_2(\reals,\rho):\int_\reals t^2\abs{f}^2d\rho(t)<+\infty\}
\end{equation*}
This operator is self-adjoint and $E_A(\Delta)=\chi_\Delta$, where $\chi_\Delta$
is the characteristic function of the Borel set
$\Delta\subset\reals$. Therefore, similar to \eqref{eq:spectral-function-def},
\begin{equation*}
  \rho(\Delta)=\inner{1}{\chi_\Delta 1}
\end{equation*}
for any Borel set
$\Delta\subset\reals$. Note that $1$ is a cyclic vector for $A$ since
the polynomials are dense in $L_2(\reals,\rho)$. Applying the
Gram-Schmidt procedure to the sequence $\{1,t,t^2,\dots\}$, one obtains an
orthonormal basis $\{p_1=1,p_2,p_3\dots\}$ contained in the domain of
$A$. It can be verified \cite[Sec.\,69]{MR1255973}
(cf. \cite[Pags.\,92,93]{MR1627806}) that the matrix
\begin{equation}
  \label{eq:matrix-multiplication-operator}
  a_{jk}=\inner{p_j}{Ap_k}\quad\forall j,k\in\nats\,.
\end{equation}
is a Jacobi matrix. According to \cite[Sec.\,47]{MR1255973}, $A$ is a
self-adjoint extension of an operator whose matrix representation is
\eqref{eq:matrix-multiplication-operator}.

By constructing an isometry $\Psi$ between $L_2(\reals,\rho)$ and
$l_2(\nats)$ such that $\Psi p_k=e_k$, one arrives at the following
central assertion (cf. \cite[Thms.\,2.3.3 and 4.1.4]{MR0184042}).
\begin{theorem}
  \label{thm:nec-suf-cond-for-meausure}
 A measure $\rho$ is the spectral measure of a Jacobi operator if
 and only if $\int_\reals d\rho=1$, all the polynomials are in
  $L_2(\reals,\rho)$ and they are dense in this space.
\end{theorem}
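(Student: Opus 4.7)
The plan is to verify both implications by reassembling the constructions already laid out just before the theorem. For the necessity direction, let $J$ be a Jacobi operator and let $\rho$ be the measure defined by \eqref{eq:spectral-function-def}. Taking $t\to+\infty$ gives $\int_\reals d\rho=\inner{e_1}{e_1}=1$. The unitary map $\Phi:L_2(\reals,\rho)\to l_2(\nats)$ furnished by the spectral theorem satisfies $\Phi\pi_k=e_k$ by \eqref{eq:unitary-map}, so each $\pi_k\eval{\reals}$ lies in $L_2(\reals,\rho)$ and the linear span of $\{\pi_k\eval{\reals}\}_{k=1}^\infty$ is dense, since $\{e_k\}_{k=1}^\infty$ is an orthonormal basis of $l_2(\nats)$ and $\Phi$ is unitary. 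Because $\pi_k$ has degree $k-1$, this gives density of the polynomials in $L_2(\reals,\rho)$.

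For the sufficiency direction, I would reverse this chain. Given $\rho$ satisfying the three listed properties, consider the operator $A$ of multiplication by the independent variable on $L_2(\reals,\rho)$ defined on its maximal domain; $A$ is self-adjoint, and the function $1$ is cyclic because the polynomials are dense. Applying Gram--Schmidt (with the standard positive-leading-coefficient convention) to $\{1,t,t^2,\dots\}$ produces an orthonormal basis $\{p_k\}_{k=1}^\infty\subset\dom(A)$ with $\deg p_k=k-1$. The central computation is that the matrix $a_{jk}=\inner{p_j}{Ap_k}$ of \eqref{eq:matrix-multiplication-operator} is of the tridiagonal form \eqref{eq:jm-0}: tridiagonality comes from the fact that $tp_k$ has degree $k$ and is therefore orthogonal to $p_j$ for $j\geq k+2$; the off-diagonal entries $a_{k,k+1}=a_{k+1,k}$ are strictly positive by the sign convention in the Gram--Schmidt construction; and the diagonal entries are real by symmetry of $A$.

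Finally, I would transport $A$ to $l_2(\nats)$ through the isometry $\Psi:L_2(\reals,\rho)\to l_2(\nats)$ uniquely determined by $\Psi p_k=e_k$, and set $J:=\Psi A\Psi^{-1}$. Because $\Psi$ intertwines multiplication by $t$ with the action of the difference expression $\Upsilon$ built from the coefficients $\{a_{jk}\}$, the operator $J$ contains the minimal operator $J_0$ of Definition~\ref{def:j-nought}, and being unitarily equivalent to the self-adjoint $A$ it is a self-adjoint extension of $J_0$, hence a Jacobi operator in the sense of Definition~\ref{def:jacobi-operator}. Since $\Psi 1=e_1$, the spectral measure of $J$ with respect to $e_1$ coincides with $\rho$.

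I expect the step most likely to require genuine care to be the verification that $(a_{jk})$ really defines a Jacobi matrix in the precise sense of \eqref{eq:jm-0} and that $\Psi$ intertwines $A$ with an extension of $J_0$ rather than merely with some symmetric operator; both points reduce to the matrix-representation discussion culminating in \eqref{eq:matrix-multiplication-operator} together with the reference to \cite[Sec.\,47]{MR1255973} already cited in the text, so no new ingredient is needed beyond what has been assembled.
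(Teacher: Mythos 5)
Your proposal is correct and follows essentially the same route as the paper: the necessity direction is the chain \eqref{eq:spectral-function-def}--\eqref{eq:unitary-map} giving $\int_\reals d\rho=1$ and density of the $\pi_k$, and the sufficiency direction is the multiplication operator $A$, the Gram--Schmidt basis $\{p_k\}$, the tridiagonal matrix \eqref{eq:matrix-multiplication-operator}, and the isometry $\Psi p_k=e_k$, exactly as assembled in the text preceding the theorem. The only difference is that you spell out details (tridiagonality, positivity of off-diagonal entries, the intertwining) that the paper delegates to \cite[Secs.\,47, 69]{MR1255973}.
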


\begin{remark}
  \label{rem:finite-matrix}
  Any probability measure with finite support is the spectral measure of the
  operator associated with some finite Jacobi matrix.
\end{remark}

\begin{definition}
  \label{def:weyl-function}
The Weyl $m$-function is defined as follows
\begin{equation}
  \label{eq:weyl-function}
  m(z):=\inner{e_1}{(J-z I)^{-1}e_1}\,,\qquad z\not\in\sigma(J)\,.
\end{equation}
Here, for a given operator $T$, $\sigma(T)$ denotes the spectrum of
it.
\end{definition}
Using the map $\Phi$, one concludes from this
definition that
\begin{equation}
  \label{eq:weyl-by-spectral-th}
  m(z)=\int_{\mathbb{R}}
  \frac{d\rho(t)}{t-z}\,.
\end{equation}
Thus, by the Nevanlinna representation theorem (see
\cite[Thm.\,5.3]{MR1307384}), $m(z)$ is a Herglotz function. Recall
that a function $f$ is Herglotz (also called Pick or Nevanlinna-Pick
function) when it is holomorphic in the upper half-plane and $\im
f(z)\ge 0$ whenever $\im z>0$.

\section{Green functions for Jacobi operators}
\label{sec:green-functions}

\begin{definition}
\label{def:green-function}
We use the following notation
\begin{equation*}
  G(z,k):=\inner{e_k}{(J-zI)^{-1}e_k}\qquad z\not\in\sigma(J)
\end{equation*}
and call $G(z,k)$ the $k$-th Green function of the Jacobi operator $J$.
Observe that $G(z,1)=m(z)$ (See Definition~\ref{def:weyl-function}).
\end{definition}
In view of
(\ref{eq:delta-k-through-delta-1}) and \eqref{eq:spectral-function-def}, one has
\begin{equation}
  \label{eq:green-integral}
  G(z,n)=\int_\reals\frac{\pi_n^2(t)d\rho(t)}{t-z}\,.
\end{equation}
Thus, for any $n\in\nats$, $G(\cdot,n)$ is a Herglotz function. This
function is extended analytically to the eigenvalues of $J$ which are simultaneously zeros
of $\pi_n$ since these points are removable singularities.

Using the
von Neumann expansion for the resolvent
(cf.\cite[Chap.\,6,\,Sec.\,6.1]{MR1711536})
\begin{equation*}
  (J-zI)^{-1}e_n=
  -\sum_{k=0}^{N-1}\frac{J^ke_n}{z^{k+1}}
  +\frac{J^N}{z^{N}}
  (J-z I)^{-1}e_n\,,\quad N\in\nats\,,
\end{equation*}
where $z\in\mathbb{C}\setminus\sigma(J)$,
one obtains the following asymptotic formula
\begin{equation}
  \label{eq:G=-asymptotics}
  G(z,n)=-\frac{1}{z} +O(z^{-2})
\end{equation}
as $z\to\infty$ along any ray intersecting the real axis only at $0$.

The following definition is taken from \cite[Def.\,2.1]{MR3377115}.
\begin{definition}
  \label{def:submatrices}
For a subspace $\mathcal{G}\subset\l_2(\nats)$ (therefore
$\mathcal{G}$ is closed), let
  $P_{\mathcal{G}}$ be the orthogonal projection onto
  $\mathcal{G}$. Also, define
  $\mathcal{G}^\perp:=\{\phi\in
  l_2(\nats):\inner{\phi}{\psi}=0\,\forall \psi\in\mathcal{G}\}$ and
  the subspace
$\mathcal{F}_n:=\Span\{e_k\}_{k=1}^n$.
  For the operator $J$ given in Definition~\ref{def:jacobi-operator}, consider the
  operators
  \begin{equation}
    \label{eq:j+-}
    J_n^+:=P_{\mathcal{F}_n^\perp}J\eval{\mathcal{F}_n^\perp}\quad n\in\nats\,,\quad
    J_n^-:=P_{\mathcal{F}_{n-1}}J\eval{\mathcal{F}_{n-1}}\quad n\in\nats\setminus\{1\}\,.
  \end{equation}
 Here, we have used the notation
 $J\eval{\mathcal{G}}$ for the restriction of $J$ to the set
 $\mathcal{G}$, that is,
 $\dom(J\eval{\mathcal{G}})=\dom(J)\cap\mathcal{G}$. Consider also the corresponding $m$-Weyl functions
 \begin{equation}
   \label{eq:def-m-functions}
   m_n^+(z):=\inner{e_{n+1}}{(J_n^+-zI)^{-1}e_{n+1}}\,,\qquad
    m_n^-(z):=\inner{e_{n-1}}{(J_n^--zI)^{-1}e_{n-1}}\,.
 \end{equation}
\end{definition}
\begin{remark}
\label{rem:same-boundary}
The operator $J_n^+$ is a self-adjoint extension of the operator whose
matrix representation with respect to the basis
$\{e_k\}_{k=n+1}^\infty$ of the Hilbert space
$(\Span\{e_k\}_{k=1}^n)^\perp$ is (\ref{eq:jm-0}) with
the first $n$ rows and $n$ columns removed. When $J_0$ is not
essentially self-adjoint, $J_n^+$ has the same boundary conditions at
infinity as the operator $J$. Clearly, the operator $J_n^-$ lives in
an $n-1$-dimensional space.
\end{remark}

\begin{remark}
  \label{rem:zeros-poly}
  By \cite[Cor.\,2.3]{MR3377115}, the set of zeros of the polynomial $\pi_n$
  coincides with the spectrum of $J_n^-$.
\end{remark}

\begin{remark}
  \label{rem:intersections-spectra}
  It follows from \cite[Lem.\,2.9, and Prop.\,3.3]{MR3377115} that
  \begin{equation*}
    \sigma(J_n^-)\cap\sigma(J_n^+)=\sigma(J_n^-)\cap\sigma(J)\,.
  \end{equation*}
\end{remark}

The next assertion is proven in \cite[Thm. 2.8]{MR1616422}  and \cite[Prop.,2.3]{MR3377115}.
\begin{proposition}
  \label{prop:Gkk-formula}
  For any $n\in\nats$
\begin{equation}
  \label{eq:Gkk-formula2}
   G(z,n)=\frac{-1}{b_n^2m_n^+(z)+b_{n-1}^2m_n^-(z)+z-q_n}\,,
\end{equation}
where we define $m^-_1(z)\equiv 0$.
\end{proposition}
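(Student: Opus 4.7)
The plan is to solve the resolvent equation $(J - zI)\psi = e_n$ explicitly using the orthogonal decomposition
\[
  l_2(\nats) = \mathcal{F}_{n-1} \oplus \Span\{e_n\} \oplus \mathcal{F}_n^\perp,
\]
where the first summand is absent when $n = 1$, consistent with the convention $m_1^-(z)\equiv 0$. Writing $\psi := (J - zI)^{-1}e_n = \psi^- + \alpha e_n + \psi^+$ with $\psi^\pm$ in the appropriate subspaces and $\alpha = \inner{e_n}{(J - zI)^{-1}e_n} = G(z,n)$, the tridiagonal structure of \eqref{eq:jm-0} makes $J$ almost diagonal in this decomposition: $J\psi^-$ lies in $\mathcal{F}_{n-1}$ plus an $e_n$-``leakage'' term $b_{n-1}\inner{e_{n-1}}{\psi^-}e_n$, and similarly $J\psi^+$ lies in $\mathcal{F}_n^\perp$ plus a leakage $b_n\inner{e_{n+1}}{\psi^+}e_n$, because the only off-diagonal couplings across the decomposition come from the tridiagonal entries $b_{n-1}$ and $b_n$.

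Applying $J - zI$ to $\psi$ and projecting onto each of the three summands then yields the system
\begin{align*}
  (J_n^- - zI)\psi^- &= -\alpha b_{n-1}e_{n-1},\\
  (J_n^+ - zI)\psi^+ &= -\alpha b_n e_{n+1},\\
  \alpha(q_n - z) + b_{n-1}\inner{e_{n-1}}{\psi^-} + b_n\inner{e_{n+1}}{\psi^+} &= 1,
\end{align*}
where the operators $J_n^\pm$ from Definition~\ref{def:submatrices} appear simply because $P_{\mathcal{F}_{n-1}}J\eval{\mathcal{F}_{n-1}}$ and $P_{\mathcal{F}_n^\perp}J\eval{\mathcal{F}_n^\perp}$ are exactly what the tridiagonal truncation produces on each block. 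For $z$ outside $\sigma(J)\cup\sigma(J_n^-)\cup\sigma(J_n^+)$ the first two equations can be inverted, and taking inner products with $e_{n-1}$ and $e_{n+1}$ yields $\inner{e_{n-1}}{\psi^-} = -\alpha b_{n-1} m_n^-(z)$ and $\inner{e_{n+1}}{\psi^+} = -\alpha b_n m_n^+(z)$ directly from \eqref{eq:def-m-functions}. Plugging these expressions into the third equation and solving the resulting scalar equation for $\alpha$ produces exactly \eqref{eq:Gkk-formula2}.

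The main obstacle is extending the identity from this generic set to all $z\notin\sigma(J)$, since $J_n^-$ may have eigenvalues outside $\sigma(J)$ at which $m_n^-$ has poles. Both sides of \eqref{eq:Gkk-formula2} are meromorphic on $\complex\setminus\sigma(J)$ and agree on an open subset, so the identity extends by analytic continuation; the pole of $m_n^-$ at a point of $\sigma(J_n^-)\setminus\sigma(J)$ becomes a removable singularity of the right-hand side, compatibly with the fact that $G(z,n)$ is analytic there (cf.\ Remark~\ref{rem:zeros-poly}, Remark~\ref{rem:intersections-spectra}). A secondary technicality is the justification of the block-matrix manipulation when $J$ is unbounded, but this is immediate once one observes, via Remark~\ref{rem:same-boundary}, that $P_{\mathcal{F}_n^\perp}\psi\in\dom(J_n^+)$ whenever $\psi\in\dom(J)$, so that $J_n^+\psi^+$ is defined and equals the $\mathcal{F}_n^\perp$-component of $J\psi^+$.
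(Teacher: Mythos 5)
Your argument is correct, and it is worth noting that the paper does not actually prove this proposition: it is stated with citations to Gesztesy--Simon \cite{MR1616422} and to \cite{MR3377115}, where the formula is obtained essentially through the representation of the Green function in terms of the two Weyl solutions of the difference equation (a Wronskian/continued-fraction type computation). Your route is different and arguably more transparent: you solve $(J-zI)\psi=e_n$ directly by block-decomposing $l_2(\nats)=\mathcal{F}_{n-1}\oplus\Span\{e_n\}\oplus\mathcal{F}_n^\perp$, using that the tridiagonal structure couples the blocks only through the entries $b_{n-1}$ and $b_n$; this makes the appearance of $J_n^\pm$ and of $m_n^\pm$ completely mechanical and handles $n=1$ automatically via the empty block. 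I checked the three projected equations and the elimination of $\inner{e_{n-1}}{\psi^-}$ and $\inner{e_{n+1}}{\psi^+}$; they do yield \eqref{eq:Gkk-formula2}. Two small points. First, in the continuation step, $\complex\setminus\sigma(J)$ may be disconnected (e.g.\ if $\supp\rho=\reals$), so ``agree on an open subset'' should be ``agree on $\complex\setminus\reals$'', which is contained in your generic set because all the operators involved are self-adjoint with real spectrum; this reaches both half-planes at once and the real points of $\complex\setminus\sigma(J)$ are then recovered as removable singularities exactly as you describe. Second, the domain issue is settled even more simply than via Remark~\ref{rem:same-boundary}: $P_{\mathcal{F}_n}\psi$ is a finite vector, hence lies in $\dom(J_0)\subset\dom(J)$, so $\psi^+=\psi-\alpha e_n-\psi^-\in\dom(J)\cap\mathcal{F}_n^\perp=\dom(J_n^+)$.
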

\begin{notation}
  Let us denote by $\mu_n$ and $\sigma_n$ the measures given by the
  Nevanlinna representation of the function $m_n^-(z)$ and $m_n^+(z)$,
  respectively, that is, $m_n^\pm$ given in \eqref{eq:def-m-functions}
  are the Borel transforms of $\mu_n$ and $\sigma_n$. Also, denote by
  $\rho_n$ the measure given by the Nevanlinna representation of the
  function $G(z,n)$. Thus
 \begin{align}
\label{eq:def-sigma-n}
   m_n^+(z)&=\int_\reals\frac{d\sigma_n(t)}{t-z}\\
\label{eq:def-mu-n}
 m_n^-(z)&=\int_\reals\frac{d\mu_n(t)}{t-z}\\
\label{eq:def-rho-n}
 G(z,n)&=\int_\reals\frac{d\rho_n(t)}{t-z}\,.
 \end{align}
We denote by  $\delta_\lambda$ the measure
\begin{equation}
  \label{eq:dirac-measure}
  \delta_\lambda(\Delta):=
  \begin{cases}
    1 & \lambda\in\Delta\\
    0 & \lambda\not\in\Delta
  \end{cases}
  \end {equation}
where  $\Delta\subset\reals $ is a Borel set.
  \end{notation}
\begin{theorem}
  \label{thm:equivalence-green-weyl}
  Fix $n\in\nats$ and let $G(z,n)$ be the $n$-th Green function of the
  Jacobi operator $J$. If the polynomials are dense
  in $L_2(\reals,\rho_n)$, then $G(z,n)$ is the $l$-th Green function
  of some other Jacobi operator for any $l\in\nats$.
\end{theorem}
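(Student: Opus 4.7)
The plan is to apply Theorem~\ref{thm:nec-suf-cond-for-meausure} to a carefully chosen probability measure built from $\rho_n$. For $l=1$, the measure $\rho_n$ itself works: by the asymptotic in \eqref{eq:G=-asymptotics} it is a probability measure, by $d\rho_n = \pi_n^2\,d\rho$ and the finiteness of all moments of $\rho$ it has polynomial moments of every order, and polynomials are dense in $L_2(\reals,\rho_n)$ by hypothesis. Theorem~\ref{thm:nec-suf-cond-for-meausure} therefore produces a Jacobi operator $\tilde{J}$ with spectral measure $\rho_n$, so that its Weyl function $\tilde{G}(z,1)$ equals $G(z,n)$.

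For $l\ge 2$ the key observation I would use is that $G(\cdot,n)$ vanishes at each of the $n-1$ zeros $s_1,\dots,s_{n-1}$ of $\pi_n$ (which, by Remark~\ref{rem:zeros-poly}, are the eigenvalues of $J_n^-$). Indeed, $\pi_n(t)/(t-s_k)$ is a polynomial of degree $n-2$, so the orthogonality of $\pi_n$ against polynomials of degree at most $n-2$ with respect to $\rho$ yields $G(s_k,n)=\int\pi_n(t)\cdot[\pi_n(t)/(t-s_k)]\,d\rho(t)=0$. Assuming $l\le n$, I would select $l-1$ of these zeros, say $s_1,\dots,s_{l-1}$, set $\hat{\pi}_l(t):=C_l\prod_{k=1}^{l-1}(t-s_k)$ with any $C_l>0$, and define
\[
d\hat{\rho}(t):=\frac{d\rho_n(t)}{\hat{\pi}_l(t)^2}\text{ on }\reals\setminus\{s_1,\dots,s_{l-1}\},\qquad \hat{\rho}(\{s_k\})=\alpha_k\ge 0,
\]
with the atoms $\alpha_k$ chosen so that $\hat{\rho}$ becomes a probability measure. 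The order-two vanishing of $\pi_n^2$ at each $s_k$ cancels the order-two pole of $1/\hat{\pi}_l^2$, so $\hat{\rho}$ is a finite positive compactly supported measure.

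Using the factorization $\pi_n=(K_n/C_l)\,\hat{\pi}_l\,\prod_{k=l}^{n-1}(t-s_k)$, I would then check that $\hat{\pi}_l$ is the $l$-th orthonormal polynomial of $\hat{\rho}$: for $j=0,\dots,l-2$,
\[
\int t^j\hat{\pi}_l\,d\hat{\rho}=\frac{1}{C_l}\int t^j\,\pi_n(t)\,\prod_{k=l}^{n-1}(t-s_k)\,d\rho(t)=0
\]
by orthogonality of $\pi_n$ against polynomials of degree at most $n-2$, while $\int\hat{\pi}_l^2\,d\hat{\rho}=\int d\rho_n=1$ is automatic. Density of polynomials in $L_2(\hat{\rho})$ follows from the compactness of support via the usual moment-uniqueness argument, so Theorem~\ref{thm:nec-suf-cond-for-meausure} furnishes a Jacobi operator $\hat{J}$ with spectral measure $\hat{\rho}$, and then
\[
\hat{G}(z,l)=\int\frac{\hat{\pi}_l(t)^2\,d\hat{\rho}(t)}{t-z}=\int\frac{d\rho_n(t)}{t-z}=G(z,n),
\]
as required. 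The main obstacle I foresee is the range $l>n$, where the $n-1$ zeros of $\pi_n$ no longer supply enough data; for this I expect to iterate the construction on $\tilde{J}$ (whose higher-order orthonormal polynomials furnish further real zeros of $\tilde{G}(\cdot,1)=G(\cdot,n)$) or to match the decomposition of Weyl functions dictated by Proposition~\ref{prop:Gkk-formula} directly, exploiting that $-1/G(z,n)-z$ is Herglotz up to a constant.
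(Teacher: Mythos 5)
Your $l=1$ step coincides with the paper's (verify the hypotheses of Theorem~\ref{thm:nec-suf-cond-for-meausure} for $\rho_n$), and your idea for $2\le l\le n$ --- building a measure $\hat\rho$ with $\hat\pi_l^2\hat\rho=\rho_n$ so that $\hat\pi_l$ becomes the $l$-th orthonormal polynomial --- is a genuinely different construction from the paper's. However, it has two real gaps. First, the density of polynomials in $L_2(\reals,\hat\rho)$ does \emph{not} ``follow from compactness of support'': the measure $\rho$, hence $\hat\rho=(\pi_n/\hat\pi_l)^2\rho+\sum_k\alpha_k\delta_{s_k}$, need not be compactly supported, and the theorem is interesting precisely for measures with unbounded support where determinacy is delicate. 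Passing from density in $L_2(\reals,\hat\pi_l^2\hat\rho)=L_2(\reals,\rho_n)$ to density in $L_2(\reals,\hat\rho)$ is exactly the nontrivial point of the whole construction; it is true, but only via the Berg--Dur\'an machinery the paper records later (Propositions~\ref{prop:berg-duran} and~\ref{prop:berg-duran-sum}: density of polynomials in $L_2(\reals,\abs{r}^2\mu)$ forces $\mu$ to be determinate, hence $N$-extremal), not by a compactness or moment-uniqueness argument. Second, the case $l>n$ is not proved at all; of the two strategies you float, the first rests on a misconception --- the zeros of the higher orthonormal polynomials $\tilde\pi_l$ of $\tilde J$ are zeros of $\tilde G(\cdot,l)$, not of $\tilde G(\cdot,1)=G(\cdot,n)$, so iterating your construction on $\tilde J$ does not supply the additional real zeros you would need.

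Your second fallback for $l>n$ is essentially the paper's actual proof, and it handles every $l$ uniformly: having shown that $G(z,n)$ is the Weyl function of some Jacobi operator, the paper writes $-G(z,n)^{-1}=z-q_1+b_1^2m_1^+(z)=z-q_1+\sum_{k\ge1}\eta_k/(\alpha_k-z)$, peels off any $l-1$ of these point masses to serve as $\widetilde b_{l-1}^2\widetilde m_l^-$ (the resolvent function of an $(l-1)\times(l-1)$ corner block), proves by a short duality argument that the residual measure $\sum_{k\ge l}\eta_k\delta_{\alpha_k}$ still has dense polynomials, and reassembles through Proposition~\ref{prop:Gkk-formula}. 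To complete your proposal you would need to replace the compactness claim by the index-of-determinacy argument for $l\le n$, and for $l>n$ you would need the paper's splitting (or some equivalent source of $l-1$ mass points of the Nevanlinna measure of $-G(z,n)^{-1}-z+q$).
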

\begin{proof}
  We show that
  the measure $\rho_n$ satisfies the conditions of
  Theorem~\ref{thm:nec-suf-cond-for-meausure}. In view of
  \eqref{eq:green-integral} and \eqref{eq:def-rho-n}, for any $n\in\nats$,
  \begin{equation*}
    \int_\reals d\rho_n= \int_\reals\pi_n^2(t) d\rho=\norm{\pi_n(\cdot)}_{L_2(\reals,\rho)}^2=1\,,
  \end{equation*}
where the last equality holds due to \eqref{eq:unitary-map}. Moreover,
for any $m\in\nats\cup\{0\}$,
\begin{equation*}
   \int_\reals t^md\rho_n(t)= \int_\reals t^m\pi_n^2(t) d\rho(t)<\infty
\end{equation*}
since all the moments of $\rho$ are finite. Thus all the polynomials
are in $L_2(\reals,\rho_n)$ and by hypothesis the polynomials are
dense there.  Therefore Theorem~\ref{thm:nec-suf-cond-for-meausure},
taking into account \eqref{eq:def-rho-n} and \eqref{eq:weyl-by-spectral-th},
implies that $G(z,n)$ is the Weyl $m$-function of some Jacobi
operator.

Let $m(z)$ be the Weyl $m$-function of some Jacobi operator $J$. We show that $m(z)$ is the
$l$-th Green function for any $l\in\nats$. By
Proposition~\ref{prop:Gkk-formula} one has
\begin{equation*}
  -m(z)^{-1}=b_1^2m_1^+(z)+z-q_1=z-q_1+\sum_{k=1}^\infty\frac{\eta_k}{\alpha_k-z}\,.
\end{equation*}
Thus, since $m_1^+$ is the Weyl $m$-function of the Jacobi operator $J_1^+$, it
follows from Theorem~\ref{thm:nec-suf-cond-for-meausure} that the measure
\begin{equation*}
  \sigma:=\sum_{k=1}^\infty \eta_k \delta_{\alpha_k}
\end{equation*}
is such that the polynomials are in $L_2(\reals,\sigma)$ and they are dense
in this space. One can also write
\begin{equation}
  \label{eq:almost-green}
   -m(z)^{-1}=z-q_1+\left(\sum_{k=1}^{l-1}+\sum_{k=l}^\infty\right)\frac{\eta_k}{\alpha_k-z}\,.
\end{equation}
Note that the measure
\begin{equation*}
  \widetilde{\rho}:=\sum_{ k\ge l}\eta_k\delta_{\alpha_k}
\end{equation*}
has also the property that all the polynomials form a dense linear subset of
$L_2(\reals,\widetilde{\rho})$.

 Indeed, on one hand the fact that all the polynomials are in
$L_2(\reals,\sigma)$ implies  the same occurs for
$L_2(\reals,\widetilde{\rho})$. On the other hand, if there is $h\in
L_2(\reals,\widetilde{\rho})$, such that
$\inner{h}{t^m}_{L_2(\reals,\widetilde{\rho})}=0$ for all
$m\in\nats\cup\{0\}$, then
\begin{equation*}
  \sum_{k=l}^\infty\alpha_k^mh(\alpha_k)\eta_k=0\quad\text{for all } m\in\nats\cup\{0\}\,.
\end{equation*}
Thus, by considering the function
\begin{equation*}
  \widetilde{h}(\alpha_k)=
  \begin{cases}
    h(\alpha_k) & k\ge l\,\\
    0 &  k < l\,,
  \end{cases}
\end{equation*}
one obtains that
\begin{equation*}
  \sum_{k=1}^\infty\alpha_k^m\widetilde{h}(\alpha_k)\eta_k=0
\quad\text{for all } m\in\nats\cup\{0\}\,.
\end{equation*}
By the density of the polynomials in $L_2(\reals,\sigma)$, one concludes
that the norm in $L_2(\reals,\sigma)$ of $\widetilde{h}$ vanishes, which in turn implies
that $\norm{h}_{L_2(\reals,\widetilde{\rho})}=0$.

For completing the proof, set
\begin{equation*}
  \widetilde{q}_l:=q_1\,,\quad
  \widetilde{b}_{l-1}^2\widetilde{m}_l^-:=\sum_{k=1}^{l-1}\frac{\eta_k}{\alpha_k-z}\,,
\quad\widetilde{b}_{l}^2\widetilde{m}_l^+:=\sum_{k=l}^\infty\frac{\eta_k}{\alpha_k-z}\,,
\end{equation*}
and substitute these expressions into \eqref{eq:almost-green} to obtain
\begin{equation*}
  -m(z)^{-1}=z-\widetilde{q}_l+\widetilde{b}_{l-1}^2\widetilde{m}_l^-+\widetilde{b}_{l}^2\widetilde{m}_l^+\,.
\end{equation*}
Finally, note that the r.\,h.\,s of the last equation is the $l$-th
Green function of some Jacobi operator by Proposition~\ref{prop:Gkk-formula}.
\end{proof}

\section{Index of determinacy}
\label{sec:stability-index-determinacy}
We begin this section by introducing the following notation. For a
nonnegative Borel measurable function $h$ and a Borel measure $\nu$,
we denote by $h\nu$ the measure which associates to any Borel set $\Delta$
the value
\begin{equation*}
  \int_\Delta hd\nu\,.
\end{equation*}
Thus $h\nu$ is the measure with density $h$ with respect to $\nu$.

The fact that a measure $\mu$ is in $\mathcal{M}$, the set of Borel
measures on $\reals$ with infinite support and all their moments
finite (see Introduction), is indeterminate $N$-extremal or
determinate may be changed by adding or substracting the mass at only
one point.
\begin{proposition}
  \label{prop:duran}
  Let $\mu\in\mathcal{M}$ be indeterminate $N$-extremal.
  \begin{enumerate}[(a)]
  \item If $\lambda\not\in\supp\mu$, then $\mu+a\delta_\lambda$
    ($a>0$) is not $N$-extremal.
  \item If $\lambda\in\supp\mu$, then
    $\mu-\mu(\{\lambda\})\delta_\lambda$ is determinate.
  \end{enumerate}
$\supp\mu$ is the minimal closed set whose complement has $\mu$-zero
measure.
\end{proposition}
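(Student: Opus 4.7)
The plan is to exploit the classical criterion that if $\nu\in\mathcal{M}$ is indeterminate, then for every $\lambda\in\reals$ the point-evaluation functional $p\mapsto p(\lambda)$ is bounded on polynomials with respect to $\norm{\cdot}_{L_2(\reals,\nu)}$; equivalently, the series $\sum_{k\ge 0}|\widetilde{p}_k(\lambda)|^2$ built from the orthonormal polynomials for $\nu$ converges. Both parts will follow by approximating the characteristic function $\chi_{\{\lambda\}}$ by polynomials in a suitable $L_2$-space and comparing norms across two related measures.

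For part (a), I would argue by contradiction: assume $\mu+a\delta_\lambda$ is $N$-extremal. By M.~Riesz's theorem there exist polynomials $p_n$ with $\norm{p_n-\chi_{\{\lambda\}}}_{L_2(\reals,\mu+a\delta_\lambda)}\to 0$. Since $\lambda\notin\supp\mu$, splitting the integral between $\{\lambda\}$ and its complement will yield simultaneously $\norm{p_n}_{L_2(\reals,\mu)}\to 0$ and $p_n(\lambda)\to 1$. But $\mu$ itself is indeterminate, so the evaluation bound above forces $p_n(\lambda)\to 0$, contradicting $p_n(\lambda)\to 1$.

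For part (b), I set $\nu:=\mu-\mu(\{\lambda\})\delta_\lambda$. Here $\mu(\{\lambda\})>0$ because an indeterminate $N$-extremal measure is purely atomic with positive weight at every support point (its support being the discrete spectrum of a limit-circle Jacobi self-adjoint extension), and $\nu\in\mathcal{M}$ because removing a single atom from an infinite support leaves infinite support. Assume for contradiction that $\nu$ is indeterminate. Using the $N$-extremality of $\mu$, I approximate $\chi_{\{\lambda\}}\in L_2(\reals,\mu)$ by polynomials $p_n$; the analogous splitting (now exploiting that $\nu$ does not charge $\{\lambda\}$ by construction) will give $\norm{p_n}_{L_2(\reals,\nu)}\to 0$ together with $p_n(\lambda)\to 1$. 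The assumed indeterminacy of $\nu$ then forces $p_n(\lambda)\to 0$ through the same evaluation bound, contradicting $p_n(\lambda)\to 1$; hence $\nu$ is determinate.

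The two parts are duals of one another: in (a) the splitting works because $\mu$ does not charge $\{\lambda\}$, while in (b) it works because $\nu$ does not charge $\{\lambda\}$ by construction. The only non-elementary ingredient is the reproducing-kernel evaluation bound characterizing indeterminacy, which is classical; beyond invoking that criterion I do not foresee any real obstacle.
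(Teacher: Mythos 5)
Your proof is correct, but it follows a genuinely different route from the paper's. The paper argues entirely through the extremal-mass property of von Neumann solutions: for (a) it invokes the existence of an $N$-extremal solution $\widetilde{\mu}$ of $\mu$'s moment problem with $\widetilde{\mu}(\{\lambda\})>0$ and the fact that an $N$-extremal measure charging a point gives the \emph{maximal} weight to that point among all solutions, so $\widetilde{\mu}+a\delta_\lambda$ beats $\mu+a\delta_\lambda$ at $\lambda$; for (b) it shows that indeterminacy of the truncated measure would produce a solution of $\mu$'s problem giving $\lambda$ more mass than $\mu$ does, again contradicting maximality. You instead combine M.~Riesz's density characterization with the reproducing-kernel bound $\abs{p(\lambda)}\le\bigl(\sum_k\abs{P_k(\lambda)}^2\bigr)^{1/2}\norm{p}_{L_2(\reals,\nu)}$, valid at every real $\lambda$ when $\nu$ is indeterminate, and run an approximation of $\chi_{\{\lambda\}}$ with the norm split between $\{\lambda\}$ and its complement. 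The two arguments ultimately rest on the same quantity $\rho(\lambda)=\bigl(\sum_k\abs{P_k(\lambda)}^2\bigr)^{-1}$ (the maximal mass equals this reciprocal), but yours avoids the existence and extremality theory of $N$-extremal solutions and is more self-contained, at the price of being longer; the paper's is shorter given the cited results of Akhiezer and Simon. Two details you handled that are genuinely needed and worth keeping explicit: the positivity $\mu(\{\lambda\})>0$ for $\lambda\in\supp\mu$ (without it the coefficient of $\abs{p_n(\lambda)-1}^2$ vanishes and you cannot conclude $p_n(\lambda)\to1$), and the fact that the evaluation bound holds at \emph{real} points in the indeterminate case, which is the standard locally uniform convergence of $\sum_k\abs{P_k(z)}^2$ on all of $\complex$.
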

\begin{proof}
(a) (Communicated by A. Dur\'an) Let $\widetilde{\mu}$ be an
$N$-extremal measure having the same moments as $\mu$ and such that
\begin{equation}
\label{eq:weight-to-lambda}
\widetilde{\mu}(\{\lambda\})>0\,.
\end{equation}
The existence of such
a $\widetilde{\mu}$ is guaranteed by \cite[Thm.\,3.41]{MR0184042}
and \cite[Thm.\,5]{MR1627806}. Thus, the measures
$\mu+a\delta_\lambda$ and
$\widetilde{\mu}+a\delta_\lambda$ have the same moments, but
\begin{equation*}
  \mu(\{\lambda\})+a<\widetilde{\mu}(\{\lambda\})+a
\end{equation*}
as a consequence of (\ref{eq:weight-to-lambda}) and the fact that
$\mu(\{\lambda\})=0$. The last inequality shows that
$\mu+a\delta_\lambda$ is not $N$-extremal since, by \cite[Thm.\,3.41]{MR0184042}
and \cite[Thm.\,5]{MR1627806}, if an $N$-extremal measure gives weight
to a point, then no other solution of the moment problem can give more
weight to that point.

(b) (\cite[Thm.\,7]{MR638619}) We give an alternative proof based on 
\cite[Thm. 3.4]{MR0184042}. Define
\begin{equation*}
  \widetilde{\mu}:=\mu-\mu(\{\lambda\})\delta_\lambda\,.
\end{equation*}
Note that $\widetilde{\mu}\in\mathcal{M}$ and
$\widetilde{\mu}(\{\lambda\})=0$. If $\widetilde{\mu}$ is
indeterminate, then there exists a solution of the moment problem
$\gamma$ such that $\gamma(\{\lambda\})>0$ due to
\cite[Thm.\,3.41]{MR0184042} (see also \cite[Thm.\,5]{MR1627806}). Now
\begin{equation*}
  \widetilde{\gamma}:=\gamma+ \mu(\{\lambda\})\delta_\lambda
\end{equation*}
is a solution of the moment problem associated with $\mu$
and gives more weight to $\lambda$ than $\mu$ which is a contradiction
\end{proof}
\begin{remark}
  \label{rem:density-polynomials}
  Since the polynomials are dense
  in $L_2(\reals,\mu)$ if and only if  $\mu$ is $N$-extremal, part (a) of
  Proposition~\ref{prop:duran} shows that the density can be destroyed
  by adding just one point mass to the measure.
\end{remark}

\subsection{Characterization of the index of
  determinacy}
\label{sec:defin-char-index}
\begin{definition}
\label{def:index}
For a determinate measure $\mu$, Berg and Dur\'an
introduce in \cite{MR1308001} the index of
determinacy as follows.
\begin{equation*}
  \ind_z\mu=\sup\{k\in\nats\cup\{0\}:\abs{t-z}^{2k}\mu\text{ is determinate}\}\,,
\end{equation*}
where $z\in\complex$.
Since the index of determinacy happens to be constant
\cite[Lem.\,3.5]{MR1308001} at complex numbers outside
the support of $\mu$, one can define
\begin{equation*}
  \ind\mu:=\ind_z\mu\qquad z\not\in\supp\mu\,.
\end{equation*}
\end{definition}

In \cite[Lem.\,2.1]{MR1375156}, the index of determinacy of a measure
is characterized when the measure is multiplied by an arbitrary
polynomial. The next assertion, which follows directly from
results due to C. Berg and A. Dur\'{a}n, describes the general situation.
\begin{proposition}
  \label{prop:berg-duran}
  Let $r$ be a polynomial with simple zeros, $\mu\in\mathcal{M}$ and
\begin{equation*}
  l:=\card\{\text{\rm zeros of }\, r\ \text{\rm outside } \supp\mu\}\,.
\end{equation*}
Then
\begin{enumerate}[(a)]
  \item $\mu$ is determinate and  $\ind\mu=l-1$ if and only if $\abs{r}^2\mu$ is indeterminate and
 $N$-extremal.
 \item $\mu$ is determinate and  $\ind\mu=k\ge l$ if and only if $\abs{r}^2\mu$ is determinate and
  $k=\ind\abs{r}^2\mu +l$.
\item $\mu$ is indeterminate or $\mu$ is determinate and
  $\ind\mu<{l-1}$ if and only if $\abs{r}^2\mu$ is indeterminate and
  not $N$-extremal.
 \end{enumerate}
\end{proposition}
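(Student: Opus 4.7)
The plan is to iterate the known behavior of the index of determinacy under multiplication by a single quadratic factor $|t-z|^2$, due to Berg and Dur\'an, over the simple zeros of $r$. Factor $|r(t)|^2 = |c|^2 \prod_{j=1}^{n}|t-z_j|^2$; the positive constant $|c|^2$ has no effect on determinacy or the index, and the product may be applied in any order since the resulting measure is the same.

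The single-factor dichotomy I would extract from \cite{MR1308001,MR1375156} reads as follows. For $\mu\in\mathcal{M}$ and $z\in\complex$: \textbf{(A)} if $z\in\supp\mu$, then $|t-z|^2\mu\in\mathcal{M}$, it is determinate iff $\mu$ is, and $\ind(|t-z|^2\mu)=\ind\mu$ in the determinate case; \textbf{(B)} if $z\notin\supp\mu$ and $\mu$ is determinate, then $|t-z|^2\mu$ is determinate with index $\ind\mu-1$ when $\ind\mu\geq 1$, and indeterminate and $N$-extremal when $\ind\mu=0$; \textbf{(C)} if $z\notin\supp\mu$ and $\mu$ is indeterminate (whether $N$-extremal or not), then $|t-z|^2\mu$ is indeterminate and not $N$-extremal.

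Next I would order the zeros so that the $n-l$ interior ones $z_{l+1},\dots,z_n\in\supp\mu$ are applied first; by (A), each such step leaves determinacy and the index unchanged, and since $\supp(|t-w|^2\mu)\subseteq\supp\mu$, the exterior zeros $z_1,\dots,z_l$ remain exterior throughout. Applying the $l$ exterior factors in succession and invoking (B)--(C): if $\mu$ is determinate with $\ind\mu=k\geq l$, then $|r|^2\mu$ is determinate with $\ind(|r|^2\mu)=k-l$, which is case (b); if $\ind\mu=l-1$, the last exterior step yields $|r|^2\mu$ indeterminate and $N$-extremal, which is case (a); if $\mu$ is determinate with $\ind\mu<l-1$ or $\mu$ is indeterminate, further applications of (B) followed by (C) produce $|r|^2\mu$ indeterminate and not $N$-extremal, which is case (c). The three hypotheses on $\mu$ are mutually exclusive and exhaustive in $\mathcal{M}$, and so are the three conclusions on $|r|^2\mu$, so the ``if'' directions follow from the ``only if'' directions automatically.

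The main obstacle, in my view, is verifying fact (A): that multiplying by $|t-w|^2$ for $w\in\supp\mu$ leaves the index invariant at all points outside the support. This combines \cite[Lem.\,3.5]{MR1308001} (constancy of the index on the complement of the support) with the observation that $\supp(|t-w|^2\mu)$ and $\supp\mu$ can differ by at most the single point $\{w\}$, which lies inside the ambient support; the exterior region where the index is evaluated is therefore unchanged, as is the determinacy status.
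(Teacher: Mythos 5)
Your proposal is correct in substance and lands on the same underlying results, but it is organized differently from the paper's proof. The paper does not iterate over single quadratic factors: it invokes Berg and Dur\'an's multi-factor lemma \cite[Lem.~2.1]{MR1375156} for the whole polynomial at once (peeling off a single exterior factor $t-a$ only in part (a), so as to apply \cite[Lem.~A]{MR1308001} to an index-zero measure), and it gets the converses from \cite[Prop.~3.2]{MR1308001} together with the same exhaustiveness observation you make at the end. Your route assumes only the one-zero versions (A)--(C) and reconstructs the general statement by induction; the ordering of interior zeros first and the remark that $\supp(\abs{t-w}^2\mu)\subseteq\supp\mu$ (so the interior/exterior classification is stable along the iteration) are exactly the extra bookkeeping this costs, and what it buys is a more self-contained, elementary argument. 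The trichotomy derivation of the ``if'' directions is cleaner than the paper's case-by-case converses and is logically sound, since the three conclusions (indeterminate $N$-extremal, determinate, indeterminate not $N$-extremal) do partition the possibilities.

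There is one genuine, if small, gap: fact (A) as you state it is too weak to close case (c) when $l=0$ and $\mu$ is indeterminate. In that situation your argument applies only (A), which yields that $\abs{r}^2\mu$ is indeterminate but not that it fails to be $N$-extremal; without the latter, the trichotomy cannot exclude the conclusion of (a), i.e., it does not prove the converse of (a) when every zero of $r$ lies in $\supp\mu$. The missing ingredient is precisely what the paper extracts from \cite[Prop.~3.2]{MR1308001}: if $\abs{p}^2\mu$ is indeterminate and $N$-extremal, then $\mu$ is determinate. Hence (A) should be strengthened to say that for $z\in\supp\mu$ and $\mu$ indeterminate, $\abs{t-z}^2\mu$ is indeterminate and not $N$-extremal. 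With that strengthening, your induction and exhaustiveness argument go through in all cases.
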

\begin{proof}
  (a) ($\Rightarrow$) Let $a\not\in\supp\mu$ be a zero of $r$. Write
  $r=(t-a)\hat{p}$. Since $\ind\mu=l-1$, we get
  $\ind\abs{\hat{p}}^2\mu=0$ by
  \cite[Lem.\,2.1(ii)]{MR1375156}. Thus, $\abs{t-a}^2\abs{\hat{p}}^2\mu$ is
  indeterminate by Definition~\ref{def:index}. Due to
  \cite[Lem.\,A(1)]{MR1308001} (cf. \cite{mriesz1923}),
  $\abs{t-a}^2\abs{\hat{p}}^2\mu$ is $N$-extremal. ($\Leftarrow$) Now, assume that
  $\abs{r}^2\mu$ is indeterminate $N$-extremal and let $a$ and $\hat{p}$ be
  as before. Using the contrapositive of \cite[Prop.\,3.2]{MR1308001},
  one has $\abs{\hat{p}}^2\mu$ is determinate. $\abs{\hat{p}}^2\mu$
  has zero index of determinacy since, otherwise $\abs{r}^2\mu$ would
  be determinate. Applying again \cite[Lem.\,2.1(ii)]{MR1375156} to
  $\abs{\hat{p}}^2\mu$, one proves the assertion.

  (b) ($\Rightarrow$) This is \cite[Lem.\,2.1(ii)]{MR1375156}.
  ($\Leftarrow$) $\abs{r}^2\mu$ determinate implies $\mu$ is
  determinate by \cite[Prop.\,3.2(i)]{MR1308001}.We must have
  $\ind\mu\ge l$ since $\ind\mu <l$ implies $\abs{r}^2\mu$ is
  indeterminate by \cite[Lem.\,2.1(i)]{MR1375156}.  From
  \cite[Lem.\,2.1(ii)]{MR1375156} follows $k=\ind\mu$

  (c) ($\Rightarrow$) If $\mu$ is indeterminate apply
  \cite[Prop.\,3.2(i)]{MR1308001}. If $\mu$ is determinate then by
  \cite[Lem.\,2.1(i)]{MR1375156} $\abs{r}^2\mu$ is an indeterminate
  measure and by (a) above it cannot be $N$-extremal.  ($\Leftarrow$)
  If $\mu$ is determinate then $\ind \mu<l-1 $ since otherwise we are in
  cases (a) or (b) above.
\end{proof}
\begin{corollary}
  \label{cor:index-weyl-function}
  Let $\rho$ be the spectral measure of a Jacobi operator $J$. For the
  $n$-th Green function $G(z,n)$ of $J$ to be the $l$-th Green function of
  some other Jacobi operator for any $l\in\nats$ it is necessary and sufficient that
  \begin{equation}
    \label{eq:ind-ge}
    \ind\rho\ge \card\{\text{\rm zeros of $\pi_n$ outside $\supp\rho$}\}-1\,.
  \end{equation}
\end{corollary}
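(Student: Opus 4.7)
\textbf{Proof plan for Corollary~\ref{cor:index-weyl-function}.} The plan is to chain Theorem~\ref{thm:equivalence-green-weyl} with Proposition~\ref{prop:berg-duran}, applied to the polynomial $r=\pi_n$, using the identity $d\rho_n=\pi_n^2\,d\rho$ that follows from \eqref{eq:green-integral} and \eqref{eq:def-rho-n}. Set $l:=\card\{\text{zeros of }\pi_n\text{ outside }\supp\rho\}$.

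First I would establish the intermediate equivalence: $G(z,n)$ is the $l$-th Green function of some Jacobi operator for every $l\in\nats$ if and only if the polynomials are dense in $L_2(\reals,\rho_n)$. The ``if'' direction is precisely Theorem~\ref{thm:equivalence-green-weyl}. For the converse, it suffices to specialize to $l=1$: if $G(z,n)$ equals the Weyl $m$-function of some Jacobi operator $\widetilde{J}$, then by the uniqueness of the Nevanlinna representation together with \eqref{eq:weyl-by-spectral-th}, the measure $\rho_n$ must be the spectral measure of $\widetilde{J}$, and Theorem~\ref{thm:nec-suf-cond-for-meausure} then forces density of polynomials in $L_2(\reals,\rho_n)$.

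Next I would translate this density condition using the Riesz characterization: polynomials are dense in $L_2(\reals,\rho_n)$ iff $\rho_n=\pi_n^2\rho$ is $N$-extremal. To invoke Proposition~\ref{prop:berg-duran} with $r=\pi_n$ I need that $\pi_n$ has simple zeros, which follows from Remark~\ref{rem:zeros-poly} together with the fact that the finite symmetric tridiagonal matrix $J_n^-$, having positive off-diagonal entries, has simple spectrum. Parts (a) and (b) of Proposition~\ref{prop:berg-duran} together then assert that $\pi_n^2\rho$ is $N$-extremal (indeterminate $N$-extremal in case (a), determinate in case (b)) precisely when $\rho$ is determinate and $\ind\rho\ge l-1$, while part (c) covers the complementary situation and thus closes the equivalence. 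Combining with the preceding paragraph yields \eqref{eq:ind-ge}.

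The only mild subtlety will be the reading of \eqref{eq:ind-ge} when $\rho$ is indeterminate: since $\ind\rho$ is defined in Definition~\ref{def:index} only for determinate measures, the inequality $\ind\rho\ge l-1$ is implicitly understood to assert both determinacy of $\rho$ and the quantitative bound, which is exactly the content of the disjunction of (a) and (b) in Proposition~\ref{prop:berg-duran}. No further obstacle is expected; the proof amounts to threading these three ingredients together.
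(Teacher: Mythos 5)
Your proposal is correct and follows essentially the same route as the paper: both directions combine Theorem~\ref{thm:equivalence-green-weyl} with Proposition~\ref{prop:berg-duran} applied to $r=\pi_n$ via the identity $d\rho_n=\pi_n^2\,d\rho$, with the necessity obtained by specializing to $l=1$ and invoking Theorem~\ref{thm:nec-suf-cond-for-meausure}. Your explicit verification that $\pi_n$ has simple zeros (via Remark~\ref{rem:zeros-poly} and the simplicity of $\sigma(J_n^-)$) is a hypothesis of Proposition~\ref{prop:berg-duran} that the paper's proof leaves tacit, so that extra care is welcome but does not change the argument.
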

\begin{proof}
  Suppose that \eqref{eq:ind-ge} holds. Then, by
  Proposition~\ref{prop:berg-duran}, the polynomials are dense in
  $L_2(\reals,\pi_n^2\rho)$. One direction of the assertion then
  follows from Theorem~\ref{thm:equivalence-green-weyl}. If one
  assumes that
  \begin{equation*}
    \ind\rho<\card\{\text{\rm zeros of $\pi_n$ outside $\supp\rho$}\}-1\,,
  \end{equation*}
  then the polynomials are not dense in $L_2(\reals,\pi_n^2\rho)$ by
  Proposition~\ref{prop:berg-duran} . Therefore $\pi_n^2\rho$ cannot
  be the spectral measure of a Jacobi operator due to
  Theorem~\ref{thm:nec-suf-cond-for-meausure} and then, by
  \eqref{eq:weyl-by-spectral-th} and \eqref{eq:green-integral},
  $G(z,n)$ is not the Weyl $m$-function of a Jacobi operator.
\end{proof}

\begin{lemma}
  \label{lem:det-less-det}
  Let $\mu$ be a determinate measure. If a measure $\nu$ is such that
  $\nu(\mathcal{A})\le\mu(\mathcal{A})$, for any Borel set $\mathcal{A}$,
  then $\nu$ is determinate.
\end{lemma}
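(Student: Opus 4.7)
The plan is to argue by contradiction, exploiting the simple decomposition $\mu=\nu+(\mu-\nu)$. First I would verify the preliminary that $\nu$ inherits finite moments from $\mu$: since $0\le\nu(\mathcal{A})\le\mu(\mathcal{A})$ for every Borel set $\mathcal{A}$, approximating $t\mapsto|t|^k$ by nonnegative simple functions yields $\int_\reals|t|^kd\nu\le\int_\reals|t|^kd\mu<\infty$ for every $k\in\nats\cup\{0\}$. Consequently $\tau:=\mu-\nu$ is a well-defined positive Borel measure with finite total mass and finite moments of every order.

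Next I would assume, for contradiction, that $\nu$ is indeterminate. Then there exists a positive Borel measure $\nu'\neq\nu$ having the same moments as $\nu$. Setting $\mu':=\nu'+\tau$, I obtain a positive Borel measure satisfying, for every $k\in\nats\cup\{0\}$,
\begin{equation*}
\int_\reals t^kd\mu'=\int_\reals t^kd\nu'+\int_\reals t^kd\tau=\int_\reals t^kd\nu+\int_\reals t^kd\tau=\int_\reals t^kd\mu\,.
\end{equation*}
Hence $\mu'$ is a solution of the Hamburger moment problem associated with the moment sequence of $\mu$.

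To close the argument I would verify that $\mu'\neq\mu$: on any Borel set $\mathcal{A}$ one has $\mu'(\mathcal{A})-\mu(\mathcal{A})=\nu'(\mathcal{A})-\nu(\mathcal{A})$, a cancellation that is legitimate because $\tau$ is a finite measure. Since $\nu'\neq\nu$, this signed difference is not identically zero, so $\mu'\neq\mu$. Having exhibited two distinct positive Borel measures with the same moment sequence as $\mu$, we contradict the determinacy of $\mu$, and therefore $\nu$ must be determinate.

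The argument is essentially routine, so there is no genuine obstacle; the only subtle point is the elementary measure arithmetic in the cancellation step, which is unambiguous precisely because $\tau(\reals)<\infty$. Note that the case in which $\nu$ has finite support requires no separate treatment: a finitely supported positive Borel measure is in any event uniquely determined by its moments, and the contradiction argument above would still apply if one were to suppose otherwise.
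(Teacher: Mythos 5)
Your proof is correct and follows essentially the same route as the paper's (an argument communicated by C.~Berg): decompose $\mu=\nu+\tau$ with $\tau=\mu-\nu$, transport a second solution $\nu'$ of the moment problem for $\nu$ to a second solution $\nu'+\tau$ for $\mu$, and contradict the determinacy of $\mu$. You merely make explicit two points the paper leaves tacit (finiteness of the moments of $\tau$ and the cancellation showing $\nu'+\tau\neq\nu+\tau$), which is fine.
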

\begin{proof}
  (Communicated by C. Berg) Suppose that there is a measure $\sigma$
  different from $\nu$ having the same moments as $\nu$. Then
  $\sigma+\tau$ and $\nu+\tau$ are two measures with the same
  moments. If one takes $\tau=\mu-\nu$, then $\mu=\nu+\tau$ has the
  same moments as $\sigma+\tau$, which is a contradiction because
  $\mu$ is determinate.
\end{proof}

With the help of Definition~\ref{def:index}, one can give more general
and precise statements regarding what happens when mass points are
added or removed from a measure in $\mathcal{M}$. The next statement
is essentially a reformulation of results by C. Berg and A. Dur\'{a}n.
\begin{proposition}
  \label{prop:berg-duran-sum}
 Let $\mathcal{F}\subset\reals$ be a finite set and
\begin{equation}
    \label{eq:beta}
    \beta:=\sum_{\xi\in\mathcal{F}}\beta_\xi\delta_\xi\,,\qquad\beta_\xi>0\,,
  \end{equation}
$\mu\in\mathcal{M}$, and $l:=\card\{\xi\in\mathcal{F}\ \text{\rm outside
}\supp\mu\}$.
\begin{enumerate}[(a)]
\item $\mu$ is determinate and $\ind\mu=l-1$ if and only if
  $\mu+\beta$ is indeterminate $N$-extremal.
\item $\mu$ is determinate and $\ind\mu=k\ge l$ if and only if
  $\mu+\beta$ is determinate and $k=\ind(\mu+\beta)+l$.
\item $\mu$ is indeterminate or $\mu$ is determinate with $\ind\mu<l-1$
  if and only if $\mu+\beta$ is indeterminate and not $N$-extremal.
\end{enumerate}
\end{proposition}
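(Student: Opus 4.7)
The plan is to reduce Proposition~\ref{prop:berg-duran-sum} to Proposition~\ref{prop:berg-duran} by introducing
\[
r(t):=\prod_{\xi\in\mathcal{F}}(t-\xi),
\]
the polynomial whose simple real zeros coincide with $\mathcal{F}$. Since $r$ vanishes on $\supp\beta=\mathcal{F}$, one has the pivotal identity $\abs{r}^2(\mu+\beta)=\abs{r}^2\mu$. Thus the single measure $\abs{r}^2\mu$ can be classified via Proposition~\ref{prop:berg-duran} in two ways: through $\mu$, for which $r$ has exactly $l$ zeros outside the support, and through $\mu+\beta$, for which every zero of $r$ lies in $\mathcal{F}\subset\supp(\mu+\beta)$, so the corresponding parameter is~$0$.

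Matching the two classifications yields the three parts. For (b), Proposition~\ref{prop:berg-duran}(b) applied to $\mu+\beta$ (parameter $0$) shows that $\mu+\beta$ is determinate if and only if $\abs{r}^2(\mu+\beta)$ is, and that then $\ind(\mu+\beta)=\ind\abs{r}^2(\mu+\beta)$; combining with Proposition~\ref{prop:berg-duran}(b) for $\mu$, which shifts the index by $l$, gives $\ind\mu=\ind(\mu+\beta)+l$. For (a), Proposition~\ref{prop:berg-duran}(a) applied to $\mu$ identifies the condition $\ind\mu=l-1$ with $\abs{r}^2\mu$ being indeterminate and $N$-extremal; this must be matched with $\mu+\beta$ being indeterminate and $N$-extremal. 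Part (c) then follows by exclusion from the trichotomy determinate / indeterminate $N$-extremal / indeterminate not $N$-extremal.

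The main obstacle is the $N$-extremality equivalence
\[
\mu+\beta\text{ is $N$-extremal}\iff\abs{r}^2(\mu+\beta)\text{ is $N$-extremal},
\]
needed for (a) and tacitly for (c). This does not follow from a direct application of Proposition~\ref{prop:berg-duran} to $\mu+\beta$ because the corresponding parameter is $0$ and part (a) of that proposition is then vacuous. I would establish it by hand via the isometry $f\mapsto rf$ from $L_2(\reals,\abs{r}^2(\mu+\beta))$ into $L_2(\reals,\mu+\beta)$, whose range is precisely $\{g\in L_2(\reals,\mu+\beta):g\eval{\mathcal{F}}=0\}$. Since the values of a polynomial of degree less than $\card\mathcal{F}$ at the distinct points of $\mathcal{F}$ can be prescribed arbitrarily (Vandermonde), such polynomials form a complement of this range in $L_2(\reals,\mu+\beta)$; consequently density of $\complex[t]$ transfers between the two $L_2$-spaces. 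With this equivalence in hand, parts (a) and (c) follow from the matching described above.
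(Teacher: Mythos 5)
Your argument is correct and takes a genuinely different route from the paper's. The paper proves (a) by citing Berg--Dur\'an directly (Thm.\,3.6, Lem.\,3.7 and Thm.\,3.9 of their 1995 paper), obtains (b) by a bootstrap --- it adjoins $k+1-l$ further mass points at locations outside $\supp(\mu+\beta)$ and applies (a) twice, once to $\mu$ with the enlarged point set and once to $\mu+\beta$ --- and gets the forward half of (c) from (a) together with Proposition~\ref{prop:duran} and Lemma~\ref{lem:det-less-det}, the converse by exclusion. You instead derive the mass-point statement from the polynomial-multiplication statement, Proposition~\ref{prop:berg-duran}, via the identity $\abs{r}^2(\mu+\beta)=\abs{r}^2\mu$ with $r(t)=\prod_{\xi\in\mathcal{F}}(t-\xi)$. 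Your observation that Proposition~\ref{prop:berg-duran} applied to $\mu+\beta$ can only safely be used through its part (b) (parameter $0$), and that the $N$-extremality equivalence between $\mu+\beta$ and $\abs{r}^2(\mu+\beta)$ must be supplied separately, is exactly right and is the crux of the reduction: invoking parts (a)/(c) of Proposition~\ref{prop:berg-duran} for $\mu+\beta$ would wrongly force $\abs{r}^2(\mu+\beta)$ to be non-$N$-extremal whenever $\mu+\beta$ is indeterminate. Your isometry/Lagrange argument for the transfer is sound; to make it airtight you should add that every polynomial splits by division with remainder as $r s+q$ with $\deg q<\card\mathcal{F}$, and that the sum of a closed subspace with a finite-dimensional one is closed, so that density of the polynomials passes through the decomposition in both directions. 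In exchange for this extra elementary Hilbert-space lemma, your route makes Proposition~\ref{prop:berg-duran-sum} a formal consequence of Proposition~\ref{prop:berg-duran}, whereas the paper's route is shorter but treats the mass-point theorems of Berg--Dur\'an as black boxes; both arguments (indeed the statement itself, in particular part (c)) should be read with the degenerate case in which $\mu$ is indeterminate and $\mathcal{F}\subset\supp\mu$ set aside, since merely increasing weights of an indeterminate $N$-extremal measure preserves $N$-extremality.
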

\begin{proof}
  (a)  One direction is \cite[Thm.\,3.6]{MR1308001} and
the converse is \cite[Lem.\,3.7, Thm.\,3.9]{MR1308001}.

  (b) Let

  \begin{equation*}
   \widetilde{\beta}=\beta+\sum_{i=1}^{k+1-l}a_i\delta_{\xi_i},\qquad a_i>0\,
  \end{equation*}
  where $\xi_i\notin\supp(\mu+\beta)$ for $i\in\{1,...,k+1-l \}$. Then
  $\widetilde{\beta}$ is a measure such that $\card\{\text{\rm
    $\xi\in \supp\widetilde{\beta}$ outside } \supp\mu\}=k+1$
  . Applying (a) above we get $\ind \mu=k$ if and only if
  $\mu+\widetilde{\beta}=\mu+\beta+\sum_{i=1}^{k+1-l}a_i\delta_{\xi_i}$
  is indeterminate $N$-extremal and this happens if and only if
  $\ind(\mu + \beta)= k-l$ by (a) again since
  $\xi_i\notin\supp(\mu+\beta)$.

 (c) ($\Rightarrow$) Let $\mathcal{C}\subset\{\xi\in\mathcal{F}\ \text{\rm outside
}\supp\mu\}$ be such that $\card\mathcal{C}=\ind\mu+1<l$. Define
\begin{equation*}
  \widetilde{\gamma}:=\mu+\sum_{\lambda\in\mathcal{C}}\beta_\lambda\delta_\lambda\,.
\end{equation*}
By item (a), $\widetilde{\gamma}$ is indeterminate $N$-extremal. By
Proposition~\ref{prop:duran} and Lemma~\ref{lem:det-less-det},
$\mu+\beta$ is indeterminate not $N$-extremal. ($\Leftarrow$) If $\mu$
is determinate then $\ind\mu<l-1$ since otherwise we are in cases (a)
or (b) above.
\end{proof}
\begin{remark}
  \label{rem:finite-index-discrete}
  A measure of finite index of determinacy is discrete
  (cf. \cite[Cor.\,3.4]{MR1308001}). In view of
  Proposition~\ref{prop:berg-duran-sum}(a), this is a consequence of the fact
  that an indeterminate $N$-extremal measure is discrete
  \cite[Chap.\,3 Sec.\,2 Pag.\,101]{MR0184042}.
\end{remark}

\begin{remark}
  \label{rem:infinite-index-discrete}
  There are measures with infinite index of determinacy being
  discrete. Indeed, take an indeterminate $N$-extremal measure and
  remove the mass at an infinite set of points. By
  Lemma~\ref{lem:det-less-det} and Proposition~\ref{prop:berg-duran-sum}(a), the index of
  determinacy of the modified measure is not finite.
\end{remark}

The following assertion is related to \cite[Rem. p. 231,
Thm.\,5]{MR1343638} (see also \cite{MR1308001} Lemma B and the comment
before Lemma D)
\begin{lemma}
 \label{lem:stability-index-cardinality0}
 Let $\mathcal{I}\subset\reals$ be an infinite discrete set and
 $\widetilde{\mathcal{F}}$ a finite set in $\reals$ such that
 $\mathcal{I}\cap\widetilde{\mathcal{F}}=\emptyset$. Consider a
 sequence 
 $\{\beta_\xi\}_{\xi\in\mathcal{I}\cup\widetilde{\mathcal{F}}}$ of positive
 numbers. Define
 \begin{equation*}
   \mu:=\sum_{\xi\in\mathcal{I}}\beta_\xi\delta_\xi\quad\text{ and
   }\quad
  \widetilde \mu=\mu -\sum_{\xi\in\mathcal{F}}\beta_\xi\delta_\xi
  + \sum_{\xi\in\mathcal{\widetilde F}}\beta_\xi\delta_\xi\,,
 \end{equation*}
 where $\mathcal{F}$ is a finite subset of $\mathcal{I}$. Suppose
 that $\mu\in\mathcal{M}$ is either indeterminate $N$-extremal or
 determinate with finite
 index of determinacy. For
 $\card\mathcal{F}=\card\mathcal{\widetilde F}$ to hold, it is necessary and
 sufficient that  either
$\ind\mu=\ind\widetilde{\mu}$ or $\mu$ and $\widetilde{\mu}$ are
simultaneously indeterminate $N$-extremal.
\end{lemma}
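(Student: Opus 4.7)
The plan is to introduce the common intermediate measure $\mu_0 := \mu - \sum_{\xi\in\mathcal{F}}\beta_\xi\delta_\xi$, so that both $\mu$ and $\widetilde{\mu}$ arise from $\mu_0$ by adjoining finitely many point masses. First I would verify that $\mu_0\in\mathcal{M}$ (infinite support because $\mathcal{I}$ is infinite and only finitely many atoms have been removed, finite moments inherited from $\mu$) and that $\supp\mu_0=\mathcal{I}\setminus\mathcal{F}$ is disjoint from both $\mathcal{F}$ and $\widetilde{\mathcal{F}}$ (the former by construction, the latter from $\mathcal{I}\cap\widetilde{\mathcal{F}}=\emptyset$). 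These are exactly the hypotheses needed to apply Proposition~\ref{prop:berg-duran-sum} to each of the relations
\begin{equation*}
\mu=\mu_0+\sum_{\xi\in\mathcal{F}}\beta_\xi\delta_\xi,
\qquad
\widetilde{\mu}=\mu_0+\sum_{\xi\in\widetilde{\mathcal{F}}}\beta_\xi\delta_\xi,
\end{equation*}
with $l$-parameters $\card\mathcal{F}$ and $\card\widetilde{\mathcal{F}}$ respectively.

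The strategy is then to classify $\mu_0$ using the first relation and read off what this forces on $\widetilde{\mu}$ via the second. If $\mu$ is indeterminate $N$-extremal, part~(a) of Proposition~\ref{prop:berg-duran-sum} pins $\mu_0$ down as determinate with $\ind\mu_0=\card\mathcal{F}-1$. Feeding this value back into the trichotomy (a)--(c) applied to the second relation, the three alternatives correspond exactly to $\card\widetilde{\mathcal{F}}$ being equal to, smaller than, or greater than $\card\mathcal{F}$; so $\widetilde{\mu}$ is indeterminate $N$-extremal if and only if the cardinalities agree. Since $\ind\mu$ is not defined in this setting, the right-hand side of the lemma reduces to ``both indeterminate $N$-extremal'', as required.

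If instead $\mu$ is determinate with finite index, part~(b) yields $\ind\mu_0=\ind\mu+\card\mathcal{F}$, a finite nonnegative integer which is in particular $\geq\card\mathcal{F}$. When $\card\widetilde{\mathcal{F}}=\card\mathcal{F}$, the inequality $\ind\mu_0\geq\card\widetilde{\mathcal{F}}$ holds and (b) applied to the second relation delivers $\widetilde{\mu}$ determinate with $\ind\widetilde{\mu}=\ind\mu_0-\card\widetilde{\mathcal{F}}=\ind\mu$. Conversely, if $\ind\mu=\ind\widetilde{\mu}$, then $\widetilde{\mu}$ is determinate (the simultaneous $N$-extremality alternative is excluded because $\mu$ already is determinate), and the same proposition forces $\ind\mu_0=\ind\widetilde{\mu}+\card\widetilde{\mathcal{F}}=\ind\mu+\card\widetilde{\mathcal{F}}$, which matched against $\ind\mu_0=\ind\mu+\card\mathcal{F}$ yields $\card\mathcal{F}=\card\widetilde{\mathcal{F}}$.

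The hard part will not be any single deep step but rather keeping the bookkeeping clean: checking at each appeal to Proposition~\ref{prop:berg-duran-sum} that the required support disjointness and index inequality hold, and making sure that the off-diagonal cases ($\card\widetilde{\mathcal{F}}\neq\card\mathcal{F}$) are genuinely excluded by the two-way match between the lemma's conditions and the trichotomy of Proposition~\ref{prop:berg-duran-sum}. Lemma~\ref{lem:det-less-det} provides a convenient sanity check that $\mu_0$ is determinate whenever $\mu$ is, supporting the consistency of the case analysis.
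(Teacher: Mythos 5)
Your argument is correct, and it takes a genuinely different route from the paper's. You factor both measures through the common intermediate $\mu_0=\mu-\sum_{\xi\in\mathcal{F}}\beta_\xi\delta_\xi$ and apply the trichotomy of Proposition~\ref{prop:berg-duran-sum} twice --- once to classify $\mu_0$ from the hypothesis on $\mu$, once to read off the status of $\widetilde{\mu}$ --- so that both implications of the lemma fall out of a single bookkeeping identity $\ind\mu_0=\ind\mu+\card\mathcal{F}=\ind\widetilde{\mu}+\card\widetilde{\mathcal{F}}$ (or its $N$-extremal analogue $\ind\mu_0=\card\mathcal{F}-1$). The paper instead proves the forward direction first: for $\mu$ indeterminate $N$-extremal it invokes the external stability result \cite[Thm.\,8]{MR638619}, and for $0\le\ind\mu=k<\infty$ it bootstraps to that case by adjoining $k+1$ auxiliary mass points on a set $\mathcal{A}$ disjoint from $\mathcal{I}\cup\widetilde{\mathcal{F}}$; the converse is then obtained by contradiction, splitting $\widetilde{\mathcal{F}}$ into a piece of cardinality $\card\mathcal{F}$ and a leftover. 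Your route is more self-contained (no appeal to the Berg--Christensen theorem, no auxiliary masses) and makes the index arithmetic explicit; the paper's route reuses a known result and keeps the reduction to the $N$-extremal case transparent. Two points you should make explicit when writing this up: the identification of the $l$-parameters with $\card\mathcal{F}$ and $\card\widetilde{\mathcal{F}}$ uses the discreteness of $\mathcal{I}$ (so that each $\xi\in\mathcal{F}$ genuinely leaves $\supp\mu_0$ once its atom is removed, and $\widetilde{\mathcal{F}}$ misses $\supp\mu_0$), and the degenerate case $\mathcal{F}=\emptyset$ should be handled by reading $\mu_0=\mu$ directly rather than through the formula $\ind\mu_0=\card\mathcal{F}-1$.
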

\begin{proof}
($\Rightarrow$)

i) For the case when $\mu$ is indeterminate $N$-extremal, the proof is
essentially given in \cite[Thm.\,8]{MR638619}.

ii) If $0\le\ind\mu=k<+\infty$, choose a set
$\mathcal{A}\subset \reals\setminus
(\mathcal{I}\cup\widetilde{\mathcal{F}})$ such that
$\card\mathcal{A}=k+1$ and consider the measure
$$\mu +\sum_{\xi\in\mathcal{A}} a_\xi\delta_\xi\,, $$
where $ a_\xi>0$. By Proposition~\ref {prop:berg-duran-sum} (a), this
measure is indeterminate $N$-extremal. It then follows from i) that the
measure
$$\widetilde \mu +\sum_{\xi\in\mathcal{A}} a_\xi\delta_\xi$$
is indeterminate $N$-extremal too. Using again
Proposition~\ref{prop:berg-duran-sum} (a) we get that
$\ind \widetilde \mu=k=\ind \mu$.

($\Leftarrow$)

Assume without loss of generality that $\card\mathcal{F}<\card\mathcal{\widetilde F}$
and let $\mathcal{G}\subset\mathcal{\widetilde F}$ be such that
$\card\mathcal{F}=\card\mathcal{G}$. Then
$$\widetilde \mu=\nu+\sum_{\xi\in {\mathcal{\widetilde F}\setminus
   \mathcal{G}}}\beta_\xi\delta_\xi\,,$$
where
$$\nu=\mu-\sum_{\xi\in\mathcal{F}}\beta_\xi\delta_\xi + \sum_{\xi\in
  \mathcal{G}}\beta_\xi\delta_\xi\,. $$ By what was proven in i) and ii)
above, either $\ind\nu=\ind\mu$ or $\mu$ and $\nu$ are simultaneously
indeterminate $N$-extremal.  By Proposition~\ref{prop:berg-duran-sum}
neither $\ind\widetilde \mu=\ind\mu$ nor $\mu$ and $\widetilde{\mu}$ are simultaneously
indeterminate $N$-extremal since
$\mathcal{\widetilde F}\setminus\mathcal{G}$ is not in the support of $\nu$.
\end{proof}
A consequence of the previous lemma is the following result,
\begin{lemma}
  \label{lem:stability-index-cardinality}
  Let $\mathcal{I}\subset\reals$ be an infinite discrete set and
  $\{\beta_\xi\}_{\xi\in\mathcal{I}}$ be a sequence of positive
  numbers. Assume that $\mathcal{F}_1,\mathcal{F}_2\subset\mathcal{I}$
  are finite sets and
  $\sum_{\xi\in\mathcal{I}\setminus\mathcal{F}_1}\beta_\xi\delta_\xi$
  is $N$-extremal not having infinite index of determinacy.
  \begin{equation*}
    \card\mathcal{F}_1=\card\mathcal{F}_2
  \end{equation*}
if and only if either
\begin{equation*}
  \ind\sum_{\xi\in\mathcal{I}\setminus\mathcal{F}_1}\beta_\xi\delta_\xi=
\ind\sum_{\xi\in\mathcal{I}\setminus\mathcal{F}_2}\beta_\xi\delta_\xi
\end{equation*}
or the measures
$\sum_{\xi\in\mathcal{I}\setminus\mathcal{F}_1}\beta_\xi\delta_\xi$
and
$\sum_{\xi\in\mathcal{I}\setminus\mathcal{F}_2}\beta_\xi\delta_\xi$
are simultaneously indeterminate $N$-extremal.

\end{lemma}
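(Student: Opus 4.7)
The plan is to reduce the statement to Lemma~\ref{lem:stability-index-cardinality0} by relabeling the index set. Write $\mu_1:=\sum_{\xi\in\mathcal{I}\setminus\mathcal{F}_1}\beta_\xi\delta_\xi$ and $\mu_2:=\sum_{\xi\in\mathcal{I}\setminus\mathcal{F}_2}\beta_\xi\delta_\xi$. I would regard $\mu_1$ as supported on the infinite discrete set $\mathcal{I}_\ast:=\mathcal{I}\setminus\mathcal{F}_1$, and set $\mathcal{F}_\ast:=\mathcal{F}_2\setminus\mathcal{F}_1$, a finite subset of $\mathcal{I}_\ast$, and $\widetilde{\mathcal{F}}_\ast:=\mathcal{F}_1\setminus\mathcal{F}_2$, a finite set disjoint from $\mathcal{I}_\ast$. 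Canceling the common atoms over $\mathcal{F}_1\cap\mathcal{F}_2$ then gives
\begin{equation*}
\mu_2=\mu_1-\sum_{\xi\in\mathcal{F}_\ast}\beta_\xi\delta_\xi+\sum_{\xi\in\widetilde{\mathcal{F}}_\ast}\beta_\xi\delta_\xi,
\end{equation*}
which is precisely the $\mu\mapsto\widetilde{\mu}$ perturbation considered in Lemma~\ref{lem:stability-index-cardinality0}.

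Next, I would reconcile the two hypotheses. Since every determinate measure in $\mathcal{M}$ is automatically $N$-extremal by \cite[Cor.\,2.3.3]{MR0184042}, the assumption that $\mu_1$ is $N$-extremal and does not have infinite index of determinacy is exactly the condition that $\mu_1$ is indeterminate $N$-extremal, or determinate with finite index of determinacy. I would then invoke Lemma~\ref{lem:stability-index-cardinality0} with $\mu_1,\mu_2,\mathcal{F}_\ast,\widetilde{\mathcal{F}}_\ast$ playing the roles of $\mu,\widetilde{\mu},\mathcal{F},\widetilde{\mathcal{F}}$, obtaining that $\card\mathcal{F}_\ast=\card\widetilde{\mathcal{F}}_\ast$ if and only if either $\ind\mu_1=\ind\mu_2$ or both $\mu_1$ and $\mu_2$ are simultaneously indeterminate $N$-extremal.

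To finish, I would observe
\begin{equation*}
\card\mathcal{F}_\ast-\card\widetilde{\mathcal{F}}_\ast=\card\mathcal{F}_2-\card\mathcal{F}_1,
\end{equation*}
so the equality $\card\mathcal{F}_\ast=\card\widetilde{\mathcal{F}}_\ast$ translates directly into $\card\mathcal{F}_1=\card\mathcal{F}_2$, yielding the desired equivalence. There is no genuine analytic obstacle beyond what Lemma~\ref{lem:stability-index-cardinality0} already delivers; the work is essentially bookkeeping. The only subtle point is that $\mathcal{F}_1$ and $\mathcal{F}_2$ need not be disjoint, so one must pass to the symmetric-difference data $\mathcal{F}_\ast$, $\widetilde{\mathcal{F}}_\ast$ (rather than $\mathcal{F}_1$, $\mathcal{F}_2$ themselves) in order to fit the framework of the previous lemma.
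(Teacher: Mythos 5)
Your proposal is correct and follows essentially the same route as the paper: both reduce the statement to Lemma~\ref{lem:stability-index-cardinality0} by writing one measure as the other minus the atoms on one piece of the symmetric difference plus the atoms on the other, and then note that $\card(\mathcal{F}_2\setminus\mathcal{F}_1)=\card(\mathcal{F}_1\setminus\mathcal{F}_2)$ if and only if $\card\mathcal{F}_1=\card\mathcal{F}_2$. Your explicit remark that ``$N$-extremal without infinite index'' matches the hypothesis of the earlier lemma, and your choice to keep $\mu_1$ (the measure carrying the hypothesis) in the role of the base measure $\mu$, are small but welcome tightenings of the paper's one-line argument.
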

\begin{proof}
 Observe that
\begin{equation*}
\sum_{\xi\in\mathcal{I}\setminus\mathcal{F}_1}\beta_\xi\delta_\xi=
\sum_{\xi\in\mathcal{I}\setminus\mathcal{F}_2}\beta_\xi\delta_\xi
\ -\!\!\!\!\!\!\sum_{\xi\in(\mathcal{I}\setminus\mathcal{F}_2)\cap F_1}\beta_\xi\delta_\xi
\ +\!\!\!\!\!\!\sum_{\xi\in(\mathcal{I}\setminus\mathcal{F}_1)\cap F_2}\beta_\xi\delta_\xi
\end{equation*}
and apply lemma \ref {lem:stability-index-cardinality0}, noting that
$\card(\mathcal{I}\setminus\mathcal{F}_2)\cap
F_1=\card(\mathcal{I}\setminus\mathcal{F}_1)\cap F_2$ if and only if
$\card F_1= \card F_2$.

\end{proof}
\begin{theorem}
  \label{thm:other-criterion}
  Let $J$ be a Jacobi operator (see
  Definition~\ref{def:jacobi-operator}) and $\rho$ its
  spectral
  measure. Assume that $\rho$ is a determinate measure and $r$ is a polynomial with
  simple zeros. Then
  \begin{equation}
    \label{eq:condition-other-criterion}
    \ind\rho\ge\card(\{\text{zeros of
  r}\}\setminus \sigma(J))-1
  \end{equation}
if and only if $r(J)e_1$ is a cyclic vector for $J$.
\end{theorem}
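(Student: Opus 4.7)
My plan is to transport the cyclicity question to a density question in $L_2(\reals,\rho)$, pass to the weighted space $L_2(\reals,|r|^2\rho)$, and finish with Proposition~\ref{prop:berg-duran}.

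First, I would use the unitary map $\Phi:L_2(\reals,\rho)\to l_2(\nats)$ introduced in Section~\ref{sec:jacobi_operators}. Since $\Phi\pi_k=e_k$ (so in particular $\Phi 1=e_1$ because $\pi_1\equiv 1$), and since $J$ is intertwined by $\Phi$ with multiplication by the independent variable, applying $\Phi^{-1}$ to $r(J)e_1$ produces the function $r$ itself. Consequently, $r(J)e_1$ is cyclic for $J$ if and only if the set $\{rp:p\text{ a polynomial}\}$ is dense in $L_2(\reals,\rho)$.

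Next, I would translate this into a density of polynomials in the weighted space $L_2(\reals,|r|^2\rho)$ via the isometry $U:L_2(\reals,|r|^2\rho)\to L_2(\reals,\rho)$ given by $Uf:=rf$. As long as $r$ vanishes on no atom of $\rho$, $U$ is onto, hence unitary, and it sends polynomials to the set $\{rp\}$, so the two density statements coincide. If instead some zero of $r$ hits an atom of $\rho$, then $r(J)e_1$ has vanishing projection onto the corresponding eigenspace and cyclicity automatically fails; this degenerate case has to be matched by a direct spectral argument against the index inequality. Cleanly handling this case, so that it lines up with the cardinality counted in (\ref{eq:condition-other-criterion}), is the main technical obstacle I anticipate.

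Finally, by M.~Riesz's theorem, polynomials are dense in $L_2(\reals,|r|^2\rho)$ precisely when $|r|^2\rho$ is $N$-extremal. Since $e_1$ is a cyclic vector for $J$, one has $\sigma(J)=\supp\rho$, and Proposition~\ref{prop:berg-duran} then identifies the $N$-extremality of $|r|^2\rho$ with the inequality $\ind\rho\ge l-1$, where $l=\card(\{\text{zeros of }r\}\setminus\sigma(J))$: cases (a) and (b) of that proposition cover the $N$-extremal side (including both indeterminate $N$-extremal and determinate $|r|^2\rho$), while case (c) corresponds precisely to $\ind\rho<l-1$. Chaining these three equivalences yields Theorem~\ref{thm:other-criterion}.
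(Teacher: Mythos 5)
Your argument is essentially the paper's own: the paper likewise converts cyclicity of $r(J)e_1$ into density of the polynomials in $L_2(\reals,\abs{r}^2\rho)$ and then invokes Proposition~\ref{prop:berg-duran} together with $\sigma(J)=\supp\rho$ and the Riesz characterization of $N$-extremality. Its proof of the ``only if'' direction is precisely your isometry $f\mapsto rf$ carried out by hand: take $w\perp J^{k}r(J)e_1$ for all $k$, write $w=\Phi h$, factor $h=\widetilde{h}r$, and conclude $\widetilde{h}=0$ from density in $L_2(\reals,\abs{r}^2\rho)$.

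The one point you leave open --- a zero of $r$ sitting at an atom of $\rho$ --- is not something you failed to handle; it is a case in which the implication from \eqref{eq:condition-other-criterion} to cyclicity is false as literally stated, so it cannot be made to ``line up'' with the cardinality in \eqref{eq:condition-other-criterion}. Concretely, if $\lambda$ is an atom of $\rho$ and $r(t)=t-\lambda$, then $\card(\{\text{zeros of }r\}\setminus\sigma(J))=0$ and \eqref{eq:condition-other-criterion} holds trivially, yet every vector $J^{k}(J-\lambda I)e_1$ is orthogonal to the eigenvector of $J$ at $\lambda$, so $r(J)e_1$ is not cyclic. The paper's proof tacitly excludes this situation: the step that writes $h=\widetilde{h}r$ and equates $\int_\reals\abs{h}^2d\rho$ with $\int_\reals\abs{\widetilde{h}}^2\abs{r}^2d\rho$ is valid only when $r\neq 0$ holds $\rho$-almost everywhere, i.e.\ when no zero of $r$ is an atom of $\rho$. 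Under that implicit hypothesis --- which is all that is needed for the later applications, e.g.\ in Corollary~\ref{cor:other-criterion} one may always choose $r$ with all zeros off $\sigma(J)$ --- your chain of equivalences is complete and coincides with the paper's proof.
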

\begin{proof}($\Leftarrow$)
  Let $u=r(J)e_1$ and assume that $u$ is a cyclic vector for $J$, i.\,e.,
  \begin{equation}
    \label{eq:density-powers}
    \cc{\Span_{k\in\nats\cup\{0\}}\{J^ku\}}=\l_2(\nats)\,.
  \end{equation}
  Since $u$ is a cyclic vector, it follows from \cite[Sec. 69,
  Thm. 2]{MR1255973} that there is a unitary map
  $\Phi:L_2(\reals,\mu)\to l_2(\nats)$, where
  $\mu(\Delta):=\inner{u}{E(\Delta)u}$ for any Borel set $\Delta$ of
  $\reals$ (see Section~\ref{sec:jacobi_operators}), such that
  $\Phi J\Phi^{-1}$ is the operator of multiplication by the
  independent variable. Thus, \eqref{eq:density-powers} is equivalent
  to
\begin{equation*}
  \cc{\Span_{k\in\nats\cup\{0\}}\{t^k\}}=L_2(\reals,\mu)\,.
\end{equation*}
For finishing this part of the proof, it only remains to note that
\begin{equation*}
  \mu=\abs{r}^2\rho
\end{equation*}
and recur to Proposition~\ref{prop:berg-duran} recalling that
$N$-extremality is equivalent to density of polynomials (see
Introduction) and that $\sigma(J)=\supp\rho$.

($\Rightarrow$) First note that $r(J)e_1$ is in
$\dom(J^k)$ for all $k\in\nats\cup\{0\}$ since $r(J)e_1$ is a finite
vector, that is, the corresponding sequence has a finite number of
elements different from zero. By Proposition~\ref{prop:berg-duran},
\eqref{eq:condition-other-criterion} implies
\begin{equation}
 \label{eq:density-r-squared}
  \cc{\Span_{k\in\nats\cup\{0\}}\{t^k\}}=L_2(\reals,\abs{r}^2\rho)
\end{equation}
and let $w\in l_2(\nats)$ be such that
\begin{equation*}
  \inner{J^kr(J)e_1}{w}=0\quad\text{ for all }k=0,1,2,\dots
\end{equation*}
This means that
\begin{equation}
  \label{eq:cyclic-cdense-aux}
  \int_\reals h(t)t^k\cc{r(t)}d\rho=0\quad\text{ for all }k=0,1,2,\dots,
\end{equation}
where $w=\Phi h$ (see Section~\ref{sec:jacobi_operators}). If one
writes $h=\widetilde{h}r$, then
$\widetilde{h}\in L_2(\reals,\abs{r}^2\rho)$ since
\begin{equation*}
 +\infty> \int_\reals \abs{h}^2d\rho=
\int_\reals \abs{\widetilde{h}}^2\abs{r}^2d\rho\,.
\end{equation*}
Hence, taking into account (\ref{eq:cyclic-cdense-aux}), one has, for
any $k\in\nats\cup\{0\}$,
\begin{equation*}
  0=\int_\reals h(t)t^k\cc{r(t)}d\rho=\int_\reals
  \widetilde{h}(t)t^k\abs{r(t)}^2d\rho
   =\inner{t^k}{\widetilde{h}}_{L_2(\reals,\abs {r}^2\rho)}\,.
\end{equation*}
Due to (\ref{eq:density-r-squared}), this implies that
$\widetilde{h}=0$, viz.,
\begin{equation*}
 0=\norm{\widetilde{h}}_{L_2(\reals,\abs{r}^2\rho)}^2=
\int_\reals \abs{\widetilde{h}}^2\abs{r}^2d\rho=\int_\reals \abs{h}^2d\rho\,.
\end{equation*}
Whence $\norm{h}_{L_2(\reals,\rho)}=0$. Thus, the vector $w$ must
vanish which means that $r(J)e_1$ is a cyclic vector.
\end{proof}

In fact, as shown below, $\ind\rho$ is the only natural number
satisfying the assertion of Theorem~\ref{thm:other-criterion}.
\begin{corollary}
  \label{cor:other-criterion}
  Let $J$ and $\rho$ and $r$ be as in
  Theorem~\ref{thm:other-criterion}. If
 $k\in\nats\cup\{0\}$ is such that $r(J)e_1$ is a cyclic
vector for $J$ whenever
\begin{equation}
  \label{eq:cyclic-one-option}
  \card(\{\text{zeros of $r$}\}\setminus \sigma(J))\le k+1
\end{equation}
and it is not a cyclic vector for $J$ whenever
\begin{equation}
  \label{eq:cyclic-two-option}
 \card(\{\text{zeros of $r$}\}\setminus \sigma(J))> k+1\,,
\end{equation}
then $k=\ind\rho$.
\end{corollary}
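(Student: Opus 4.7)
The plan is to exploit the iff character of Theorem~\ref{thm:other-criterion} by testing it against two polynomials sitting on opposite sides of the cyclicity threshold $k+1$ dictated by the hypothesis.

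First I would dispose of the degenerate case $\ind\rho=\infty$: in that case every polynomial $r$ with simple zeros satisfies $\ind\rho\ge\card(\{\text{zeros of }r\}\setminus\sigma(J))-1$ trivially, so by Theorem~\ref{thm:other-criterion} the vector $r(J)e_1$ is \emph{always} cyclic. The non-cyclic half of the hypothesis then cannot be realised for any finite $k$, so the implication is vacuously true. Hence we may assume $\ind\rho<\infty$. Remark~\ref{rem:finite-index-discrete} then ensures that $\sigma(J)=\supp\rho$ is discrete; in particular $\reals\setminus\sigma(J)$ is uncountable, so polynomials with any prescribed finite number of simple zeros located outside $\sigma(J)$ are freely available.

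Having reduced to that setting, I would establish the two inequalities $\ind\rho\ge k$ and $\ind\rho\le k$ separately by exhibiting extremal test polynomials. For the first, pick a polynomial $r_{+}$ with exactly $k+1$ simple zeros, all outside $\sigma(J)$; hypothesis~\eqref{eq:cyclic-one-option} applies and gives that $r_{+}(J)e_1$ is cyclic, so one direction of Theorem~\ref{thm:other-criterion} yields $\ind\rho\ge(k+1)-1=k$. For the reverse, pick $r_{-}$ with exactly $k+2$ simple zeros outside $\sigma(J)$; now hypothesis~\eqref{eq:cyclic-two-option} forces $r_{-}(J)e_1$ to be non-cyclic, and the contrapositive of Theorem~\ref{thm:other-criterion} gives $\ind\rho<(k+2)-1=k+1$, that is $\ind\rho\le k$. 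Combining the two bounds yields $k=\ind\rho$.

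The argument is essentially bookkeeping on top of Theorem~\ref{thm:other-criterion}. The only point requiring real care, and thus the closest thing to a ``main obstacle'', is ensuring that enough points lie outside $\sigma(J)$ to realise $r_{+}$ and $r_{-}$ with the prescribed zero counts; this is precisely why the preliminary reduction via Remark~\ref{rem:finite-index-discrete} to the discreteness of $\supp\rho$ is needed, since a priori nothing would prevent $\sigma(J)$ from being too large to accommodate $k+2$ extra simple zeros.
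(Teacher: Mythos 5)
Your proof is correct and follows essentially the same route as the paper: both squeeze $\ind\rho$ between $k$ and $k$ by applying the biconditional of Theorem~\ref{thm:other-criterion} to test polynomials with $k+1$ and with more than $k+1$ (you use $k+2$, the paper uses $\ind\rho+1$) simple zeros outside $\sigma(J)$. Your preliminary reduction to $\ind\rho<\infty$ and the use of Remark~\ref{rem:finite-index-discrete} to guarantee that the required test polynomials exist are points the paper's proof glosses over, but they do not change the substance of the argument.
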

\begin{proof}
Suppose that $\ind\rho<k$. Choose a polynomial $r$ such that
$k=\card(\{\text{zeros of $r$}\}\setminus \sigma(J))-1$. It follows
from \eqref{eq:cyclic-one-option} that $r(J)e_1$ is a cyclic
vector. But $\ind\rho<\card(\{\text{zeros of $r$}\}\setminus
\sigma(J))-1$ implies that $r(J)e_1$ is not cyclic by
Theorem~\ref{thm:other-criterion}. So, assuming $\ind\rho<k$ leads to
a contradiction. Therefore $\ind\rho\ge k$. Let $\ind\rho > k$.
If $r$ is such that $\ind\rho=\card(\{\text{zeros of $r$}\}\setminus \sigma(J))-1$,
then Theorem~\ref{thm:other-criterion} implies that $r(J)e_1$
is cyclic vector. But in this case $k<\card(\{\text{zeros of
  $r$}\}\setminus \sigma(J))-1$ and \eqref{eq:cyclic-two-option}
implies that $r(J)e_1$ is a not cyclic vector. We get again a
contradiction. Therefore $k=\ind\rho$.
\end{proof}
\subsection{Stability of the index of determinacy}
Let us study the stability of the index of determinacy and the
$N$-extremality for measures. First we deal with the case when the
support of the measure does not change.
\begin{proposition}
  \label{prop:changing-weights}
  Changing the weights of a measure can change its index of
  determinacy.
\end{proposition}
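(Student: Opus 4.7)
The plan is to exhibit two measures in $\mathcal{M}$ sharing the same support but differing in both their weights and their indices of determinacy. The key tool is Proposition~\ref{prop:berg-duran}(b), which tracks exactly how multiplication by the square of a polynomial alters the index.

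Concretely, I will begin by fixing a determinate measure $\mu\in\mathcal{M}$ with a finite, positive index of determinacy, say $k:=\ind\mu\ge 1$. Such a $\mu$ exists by the machinery already developed in this section: starting from an indeterminate $N$-extremal discrete measure (classically available, for example among the $N$-extremal solutions to the Stieltjes--Wigert moment problem), Proposition~\ref{prop:duran}(b) produces a determinate measure after removing a single point mass, and Proposition~\ref{prop:berg-duran-sum} then allows one to tune the resulting index to any prescribed finite value by further adjustments outside or inside the support.

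With $\mu$ fixed, I pick any $z_0\in\reals\setminus\supp\mu$ and set $r(t):=t-z_0$. Since $r^2$ is strictly positive on $\supp\mu$, the measure $\tilde\mu:=r^2\mu$ lies in $\mathcal{M}$ and has exactly the same support as $\mu$; at each atom $\xi\in\supp\mu$, however, the weight of $\mu$ gets multiplied by the non-constant factor $(\xi-z_0)^2$, so $\tilde\mu$ differs from $\mu$ precisely in its weight distribution. Because the only zero of $r$ lies outside $\supp\mu$, the integer $l$ appearing in Proposition~\ref{prop:berg-duran} equals $1$; as $k\ge l$, part (b) of that proposition yields that $\tilde\mu$ is determinate with $\ind\tilde\mu=k-1\ne k=\ind\mu$.

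The essentially only nontrivial step, and therefore the main obstacle, is producing the initial determinate measure with a prescribed finite positive index; once $\mu$ is in place the conclusion is a direct application of Proposition~\ref{prop:berg-duran}(b).
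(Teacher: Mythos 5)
Your proof is correct, but it takes a genuinely different route from the paper's. The paper starts from an indeterminate $N$-extremal measure $\nu$ and reweights it, using the Freud criterion ($\int e^{\epsilon\abs{t}}d\mu<\infty$ implies determinacy), into a measure $\mu$ of \emph{infinite} index on the same support; deleting the mass at the same $n+1$ points from $\nu$ and from $\mu$ then produces two measures with identical support whose indices are $n$ and $>n$ respectively. You instead take a single determinate measure of finite index $k\ge 1$ and reweight it by the polynomial factor $(\xi-z_0)^2$ with $z_0\notin\supp\mu$, so that Proposition~\ref{prop:berg-duran}(b) with $l=1$ gives $\ind\tilde\mu=k-1\ne k$; the support is unchanged because $(t-z_0)^2$ is bounded away from zero on the closed set $\supp\mu$. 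Both arguments ultimately need an indeterminate $N$-extremal seed measure and Proposition~\ref{prop:berg-duran-sum}(a) to manufacture a finite-index example (your existence step is exactly the paper's construction of $\sigma$, so it is not really an extra obstacle). What your approach buys is self-containedness and precision: it uses only the Berg--Dur\'an index calculus already established in the paper, no external determinacy criterion, and it quantifies the change (the index drops by exactly one, and iterating drops it by any prescribed amount). What the paper's approach buys is a more dramatic conclusion: the same support simultaneously carries a measure of any finite index and one of infinite index, and a reweighting can even turn an indeterminate measure into a determinate one, which your polynomial reweighting cannot exhibit since it always moves within the determinate class when $k\ge l$.
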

\begin{proof}
  Consider the following criterion for a measure to be determinate
  \cite[Thm.\,5.2,\,pag.\,84]{MR0481888}: If there is $\epsilon >0$
  such that
  \begin{equation}
    \label{eq:criteria-determinacy}
    \int_\reals e^{\epsilon\abs{t}}d\mu<\infty\,,
  \end{equation}
  then $\mu$ is determinate. Thus, an indeterminate, $N$-extremal measure $\nu$,
  can be transformed into $\mu$ by changing the weights so that
  \eqref{eq:criteria-determinacy} holds. Now consider a measure
  $\sigma$ of index $n$ obtained from $\nu$ by removing the mass at
  $n+1$ points. The measure $\widetilde{\sigma}$ obtained by removing
  from $\mu$ the mass at the same $n+1$ points has index of
  determinacy greater than $n$.
\end{proof}
\begin{proposition}
  \label{prop:finite-change-weights}
  By changing a finite number of weights the index of determinacy is
  preserved.
\end{proposition}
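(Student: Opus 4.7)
The plan is to express both $\mu$ and the modified measure $\widetilde{\mu}$ as perturbations of a common auxiliary measure by point masses supported on the same finite set, and then apply Proposition~\ref{prop:berg-duran-sum}(b) twice in opposite directions so that the two shifts cancel. Throughout, I interpret ``changing a finite number of weights'' in the most natural way: finitely many positive weights are replaced by other positive values at the same points, so $\supp\mu=\supp\widetilde{\mu}$. The other cases in which the cardinality of the support changes are not within the scope of the proposition (and, as is clear from Proposition~\ref{prop:berg-duran-sum}, they genuinely do alter the index).

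Let $\mathcal{F}=\{\xi_1,\dots,\xi_l\}$ denote the finite set of points where the weights are changed, so $\mu(\{\xi_i\})=\beta_i>0$ and $\widetilde{\mu}(\{\xi_i\})=\gamma_i>0$, while $\mu$ and $\widetilde{\mu}$ agree on $\reals\setminus\mathcal{F}$. Define
\begin{equation*}
  \nu:=\mu-\sum_{i=1}^{l}\beta_i\delta_{\xi_i}\in\mathcal{M},
\end{equation*}
so that the points of $\mathcal{F}$ all lie outside $\supp\nu$. Setting $\beta:=\sum_{i}\beta_i\delta_{\xi_i}$ and $\widetilde{\beta}:=\sum_{i}\gamma_i\delta_{\xi_i}$, one has $\mu=\nu+\beta$ and $\widetilde{\mu}=\nu+\widetilde{\beta}$, and the cardinality parameter appearing in Proposition~\ref{prop:berg-duran-sum} equals the same value $l$ in both cases, because $\mathcal{F}\cap\supp\nu=\emptyset$.

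Suppose first that $\ind\mu=n$ is finite. Applying Proposition~\ref{prop:berg-duran-sum}(b) in the reverse direction to the decomposition $\mu=\nu+\beta$ yields that $\nu$ is determinate with $\ind\nu=n+l$. Since $n+l\ge l$, Proposition~\ref{prop:berg-duran-sum}(b) applies again in the forward direction to $\widetilde{\mu}=\nu+\widetilde{\beta}$, giving $\ind\widetilde{\mu}=(n+l)-l=n=\ind\mu$. The case $\ind\mu=\infty$ follows by symmetry: if $\widetilde{\mu}$ were indeterminate or $\ind\widetilde{\mu}$ were finite, interchanging the roles of $\mu$ and $\widetilde{\mu}$ in the previous step would force $\ind\mu$ to be finite as well, a contradiction.

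The only real obstacle is conceptual rather than technical: one must recognize that the two apparently independent modifications (erasing the weights on $\mathcal{F}$ and re-assigning them) factor through the same stripped measure $\nu$ whose support is disjoint from $\mathcal{F}$. Once this factorization is in place, the two applications of Proposition~\ref{prop:berg-duran-sum}(b) shift the index by $+l$ and then by $-l$, producing the desired invariance. No additional tool beyond \ref{prop:berg-duran-sum}(b) is required.
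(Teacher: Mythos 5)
Your proof is correct, but it follows a different route from the paper's. You factor both $\mu$ and $\widetilde{\mu}$ through the common stripped measure $\nu=\mu-\beta$ and apply Proposition~\ref{prop:berg-duran-sum}(b) twice --- once in the reverse direction (using that the trichotomy (a)/(b)/(c) is exhaustive and that $\mu=\nu+\beta$ determinate forces case (b), so $\ind\nu=n+l$) and once forward --- so the shifts $+l$ and $-l$ cancel; the infinite-index case then follows by the same trichotomy. The paper instead invokes Proposition~\ref{prop:berg-duran-sum}(a) to realize $\mu$ as an indeterminate $N$-extremal measure $\mu_0$ minus finitely many atoms, cites an external stability theorem of P\'erez Riera and Varona Malumbres (\cite[Thm.\,5(b)]{MR1343638}) asserting that modifying one weight of an indeterminate $N$-extremal measure preserves indeterminate $N$-extremality, and then re-subtracts the same atoms; it treats one weight at a time and disposes of the infinite-index case separately via part (a). Your argument is more self-contained (no external citation), handles all finitely many weight changes at once, and treats finite and infinite index uniformly; the paper's argument has the side benefit of running through the $N$-extremal setting, whose stability under weight changes is itself used elsewhere. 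One small point: your justification $l=\card\mathcal{F}$ rests on the claim $\mathcal{F}\cap\supp\nu=\emptyset$, which requires the $\xi_i$ to be isolated points of $\supp\mu$ (automatic when $\ind\mu<\infty$, since such measures are discrete without finite accumulation points, but not in general); however, your argument survives without it, because all that is actually needed is that the cardinality parameter of Proposition~\ref{prop:berg-duran-sum} is the \emph{same} in the two decompositions $\nu+\beta$ and $\nu+\widetilde{\beta}$, which holds since $\nu$ and $\mathcal{F}$ are the same in both.
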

\begin{proof}
  From Proposition~\ref{prop:berg-duran-sum}(a), a measure has an infinite
  index of determinacy if and only if, after adding any finite number
  of mass points, it remains determinate. Thus, changing a finite
  number of weights do not alter the infinite index of
  determinacy. Suppose that $\ind\mu<\infty$, then, by
  Proposition~\ref{prop:berg-duran-sum}(a) (see also
  \cite[Pag.\,129]{MR1375156}), $\mu$ is obtained by removing from an
  indeterminate $N$-extremal measure $\mu_0$ the mass at a finite set
  of points. According to \cite[Thm.\,5(b)]{MR1343638} the measure
  $\widetilde{\mu}_0$ obtained by modifying the weight of $\mu_0$ at
  one mass point is indeterminate $N$-extremal. Adding to
  $\widetilde{\mu}_0$ the same masses at the same points that were
  substracted from $\mu_0$ to obtain $\mu$ yields a measure $\eta$ with the
  same index of determinacy as $\mu$. Note that $\eta$ is equal to
  $\mu$ with one weight modified.
\end{proof}
\begin{theorem}
  \label{thm:main-minimal-equal-indices}
  Let $J$ and $\widehat{J}$ be Jacobi operators as given in
  Definition~\ref{def:jacobi-operator} with spectral measures $\rho$
  and $\widehat{\rho}$, respectively. Suppose that, for some
  $n\in\nats\setminus\{1\}$,
  \begin{equation}
    \label{eq:equal-number-of-ceros-in-supp}
    \card(\sigma(J_n^-)\cap\sigma(J))=
    \card(\sigma(\widehat{J}_n^-)\cap\sigma(\widehat{J}))\,,
  \end{equation}
where $J_n^-$ and $\widehat{J}_n^-$ are given in Definition~\ref{def:submatrices}.
Consider the measure $\rho_n$ given in \eqref{eq:def-rho-n} and the
corresponding measure $\widehat{\rho}_n$ for $\widehat{J}$.
If $\ind\rho_n=\ind\widehat{\rho}_n$ or $\rho_n$ and
$\widehat{\rho}_n$ are simultaneously indeterminate $N$-extremal, then
$\ind\rho=\ind\widehat{\rho}$. Conversely, if
\begin{equation*}
  \ind\rho=\ind\widehat{\rho}\ge n-(\card(\sigma(J_n^-)\cap\sigma(J))+1)\,,
\end{equation*}
then, $\ind\rho_n=\ind\widehat{\rho}_n$ or $\rho_n$ and
$\widehat{\rho}_n$ are simultaneously indeterminate $N$-extremal.
\end{theorem}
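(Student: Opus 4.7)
The plan is to relate each spectral measure to its Green-function-associated measure via Proposition~\ref{prop:berg-duran}, and then to match the two sides using the hypothesis \eqref{eq:equal-number-of-ceros-in-supp}. First, I would identify $\rho_n = \pi_n^2 \rho$ (and analogously $\widehat{\rho}_n = \widehat{\pi}_n^2 \widehat{\rho}$, where $\widehat{\pi}_n$ denotes the $n$-th first-kind polynomial associated with $\widehat{J}$); this is immediate on comparing \eqref{eq:green-integral} with \eqref{eq:def-rho-n}. Next, by Remark~\ref{rem:zeros-poly} the zeros of $\pi_n$ coincide with $\sigma(J_n^-)$, and since $\pi_n$ has degree $n-1$ while $J_n^-$ acts on an $(n-1)$-dimensional space, these zeros are exactly $n-1$ in number and simple. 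Setting
\[
l := \card\{\text{zeros of }\pi_n\text{ outside }\supp\rho\} = (n-1) - \card(\sigma(J_n^-)\cap\sigma(J)),
\]
and defining $\widehat{l}$ analogously, the hypothesis \eqref{eq:equal-number-of-ceros-in-supp} is precisely $l = \widehat{l}$.

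For the forward direction, I would split according to the two alternatives in the assumption. If $\rho_n$ and $\widehat{\rho}_n$ are both indeterminate $N$-extremal, the $\Leftarrow$ direction of Proposition~\ref{prop:berg-duran}(a) yields $\ind\rho = l - 1$ and $\ind\widehat{\rho} = \widehat{l} - 1$, which coincide since $l = \widehat{l}$. If instead $\ind\rho_n = \ind\widehat{\rho}_n =: k \in \nats \cup \{0\}$, then $\rho_n$ and $\widehat{\rho}_n$ are both determinate, so the $\Leftarrow$ direction of Proposition~\ref{prop:berg-duran}(b) applied on each side gives $\ind\rho = k + l$ and $\ind\widehat{\rho} = k + \widehat{l}$, which again coincide.

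For the converse, I would first observe that the assumption $\ind\rho = \ind\widehat{\rho} \geq n - (\card(\sigma(J_n^-)\cap\sigma(J)) + 1)$ unfolds to $\ind\rho = \ind\widehat{\rho} \geq l$. In this range the $\Rightarrow$ direction of Proposition~\ref{prop:berg-duran}(b) applies on both sides and yields $\ind\rho_n = \ind\rho - l$ together with $\ind\widehat{\rho}_n = \ind\widehat{\rho} - \widehat{l}$; these are equal because $\ind\rho = \ind\widehat{\rho}$ and $l = \widehat{l}$. The only subtlety I anticipate is the bookkeeping that converts \eqref{eq:equal-number-of-ceros-in-supp} into $l = \widehat{l}$ and recognises the threshold in the converse as the exact borderline between cases (a) and (b) of Proposition~\ref{prop:berg-duran}; once those identifications are in place, the theorem is a direct double invocation of that proposition.
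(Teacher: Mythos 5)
Your proposal is correct and follows essentially the same route as the paper: translate \eqref{eq:equal-number-of-ceros-in-supp} into the equality of the counts $l=\widehat{l}$ of zeros of $\pi_n$ and $\widehat{\pi}_n$ lying outside $\supp\rho$ and $\supp\widehat{\rho}$ (using Remark~\ref{rem:zeros-poly} and the simplicity of $\sigma(J_n^-)$), identify $\rho_n=\pi_n^2\rho$, and then invoke Proposition~\ref{prop:berg-duran} in both directions. The paper compresses the case analysis into ``the assertion then follows from Proposition~\ref{prop:berg-duran}''; you have merely spelled out the same invocations of parts (a) and (b).
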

\begin{proof}
  Due to the fact that $\sigma(J_n^-)$ is simple,
  \eqref{eq:equal-number-of-ceros-in-supp} and
  Remark~\ref{rem:zeros-poly} imply that the number of zeros of
  $\pi_n$ that are not in the $\supp\rho$ is equal to the number of
  zeros of $\widehat{\pi}_n$ that are not in the
  $\supp\widehat{\rho}$. The assertion then follows from Proposition
  \ref{prop:berg-duran}
\end{proof}
  The following statement gives a criterion for two measures with the same support and
  different weights to have the same index of determinacy.
\begin{corollary}
  \label{cor:isospectral-matrices-equal-indices}
  Let $J$, $\widehat{J}$, $\rho$, $\widehat{\rho}$, $\rho_n$, and
  $\widehat{\rho}_n$ be as in the previous theorem.
  Assume that $J$ and $\widehat{J}$ are isospectral Jacobi operators. If, for
  some $n\in\nats$, $\rho_n=\widehat{\rho}_n$ and $\rho_n$ is
  $N$-extremal, then $\ind\rho=\ind\widehat{\rho}$
\end{corollary}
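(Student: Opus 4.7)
The plan is to reduce the corollary to Theorem~\ref{thm:main-minimal-equal-indices}, applied to the same pair $(J,\widehat{J})$ and the same index $n$. That theorem has two hypotheses: the cardinality equality \eqref{eq:equal-number-of-ceros-in-supp}, and (for its sufficient direction) the coincidence $\ind\rho_n=\ind\widehat{\rho}_n$ or the simultaneous indeterminate $N$-extremality of $\rho_n$ and $\widehat{\rho}_n$. Both must be extracted from the weaker data $\sigma(J)=\sigma(\widehat J)$, $\rho_n=\widehat\rho_n$, and $\rho_n$ is $N$-extremal.

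The second condition is immediate: since the two measures are literally the same and $\rho_n$ is $N$-extremal, either both are determinate (with a common index of determinacy) or both are indeterminate $N$-extremal, so the condition holds trivially.

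For the cardinality condition, I would identify $\supp\rho_n$ explicitly. Combining \eqref{eq:green-integral} with \eqref{eq:def-rho-n} gives $\rho_n=\pi_n^2\rho$, and by Remark~\ref{rem:zeros-poly} the zero set of $\pi_n$ equals $\sigma(J_n^-)$. Since $e_1$ is cyclic for $J$, one has $\supp\rho=\sigma(J)$. Because $\pi_n$ has only finitely many zeros, multiplying $\rho$ by $\pi_n^2$ removes precisely those isolated atoms of $\rho$ that lie in $\sigma(J_n^-)$, giving
\begin{equation*}
\supp\rho_n=\sigma(J)\setminus\bigl(\sigma(J_n^-)\cap\sigma(J)\bigr),
\end{equation*}
and the analogous identity for $\widehat{\rho}_n$, using $\supp\widehat{\rho}=\sigma(\widehat{J})=\sigma(J)$ by isospectrality. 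The equality $\rho_n=\widehat{\rho}_n$ forces $\supp\rho_n=\supp\widehat{\rho}_n$, and comparing the two formulas yields $\sigma(J_n^-)\cap\sigma(J)=\sigma(\widehat{J}_n^-)\cap\sigma(J)$, which is stronger than \eqref{eq:equal-number-of-ceros-in-supp}. Applying Theorem~\ref{thm:main-minimal-equal-indices} then gives $\ind\rho=\ind\widehat{\rho}$.

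The main obstacle is that the support identity above is genuinely valid only when the relevant points of $\sigma(J_n^-)\cap\sigma(J)$ are isolated in $\sigma(J)$, which is automatic when $\rho$ is discrete but not a priori otherwise. To handle this I would argue as follows. Proposition~\ref{prop:berg-duran}(c) shows that $N$-extremality of $\rho_n=\pi_n^2\rho$ forces $\rho$ to be determinate (and similarly for $\widehat\rho$). If $\rho_n$ is indeterminate $N$-extremal or has finite index of determinacy, then $\rho_n$ is discrete (Remark~\ref{rem:finite-index-discrete}), hence so is $\rho$ (since $\pi_n^2$ is strictly positive off a finite set), and the argument of the previous paragraph applies verbatim. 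In the remaining case $\ind\rho_n=\infty$, Proposition~\ref{prop:berg-duran}(b) applied to $\rho$ and to $\widehat{\rho}$ yields $\ind\rho=\ind\widehat{\rho}=\infty$ directly, bypassing Theorem~\ref{thm:main-minimal-equal-indices} altogether.
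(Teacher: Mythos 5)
Your proof is correct and follows essentially the same route as the paper's: both reduce the statement to Theorem~\ref{thm:main-minimal-equal-indices} by using the identity $\pi_n^2\rho=\widehat{\pi}_n^2\widehat{\rho}$ to deduce \eqref{eq:equal-number-of-ceros-in-supp}, the second hypothesis of that theorem being automatic because $\rho_n=\widehat{\rho}_n$. The only difference is one of rigor: the paper asserts in a single line that the zeros of $\pi_n$ in $\supp\rho$ coincide with those of $\widehat{\pi}_n$ in $\supp\widehat{\rho}$, whereas you justify the underlying support comparison via discreteness and dispose of the infinite-index case separately through Proposition~\ref{prop:berg-duran}(b).
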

\begin{proof}
 For any Borel set $\Delta$,
  \begin{equation}
    \label{eq:equal-measures}
    \int_\Delta\pi_n^2(t)d\rho(t)=\rho_n(\Delta)=\widehat{\rho}_n(\Delta)=
\int_\Delta\widehat{\pi}_n^2(t)d\widehat{\rho}(t)\,.
  \end{equation}
  This implies that the zeros of $\pi_n$ that are in the $\supp\rho$
  coincide with the zeros of $\widehat{\pi}_n$ that are in the
  $\supp\widehat{\rho}$. Therefore
  \eqref{eq:equal-number-of-ceros-in-supp} holds. It remains to apply
  Theorem~\ref{thm:main-minimal-equal-indices}.
\end{proof}

If $\rho\in\mathcal{M}$ is determinate, then $\rho$ is the spectral
measure of a Jacobi operator $J$ in the sense of
\eqref{eq:spectral-function-def}. In this case, $J$ is the unique
self-adjoint extension of $J_0$ (see Definition~\ref{def:j-nought}), i.\,e., $J_0$ is essentially
self-adjoint \cite[Thm.\,2.2]{MR0184042}, \cite[Thm.\,2]{MR1627806}.

The following assertion appears in \cite[Thm.\,1]{MR1353377}. We
reproduce it here with a brief proof for the reader's convenience.

\begin{proposition}
  \label{prop:berg-duran-operator-powers}
  Let $\rho$ be a determinate measure. For the measure $\rho$ to have
  index of determinacy $k$, it is necessary and sufficient that
  $J_0^l$ is essentially self-adjoint for $l=1,\dots,k+1$ and
  $J_0^{k+2}$ is not essentially self-adjoint. The measure $\rho$ has infinite index
  of determinacy if and only if $J_0^l$ is essentially self-adjoint
  for all $l\in\nats$.
\end{proposition}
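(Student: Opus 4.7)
The plan is to transport the essential self-adjointness question for $J_0^l$ across the unitary $\Phi$ of Section~\ref{sec:jacobi_operators}, read it as a polynomial density statement in a weighted $L_2$-space, and then apply Proposition~\ref{prop:berg-duran} to a cleverly chosen polynomial so that the weight becomes $\abs{r(t)}^2$ with all zeros of $r$ outside $\supp\rho$. This reduces the proposition to the single clean equivalence
\[
  J_0^l \text{ is essentially self-adjoint}\ \Longleftrightarrow\ \ind\rho\ge l-1,
\]
from which both assertions of the proposition follow at once.

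First I would set up the transfer. Since $\rho$ is determinate, $\overline{J_0}=J$ is the unique self-adjoint extension, and $\Phi$ carries $J$ to the self-adjoint multiplication operator $M$ by the independent variable on its maximal domain in $L_2(\reals,\rho)$; hence $J^l$ is unitarily equivalent to $M^l$, multiplication by $t^l$ on $\dom(M^l)=\{f\in L_2(\reals,\rho):t^lf\in L_2(\reals,\rho)\}$. The operator $J_0^l$ corresponds under $\Phi$ to the restriction of $M^l$ to the polynomials $\mathcal{P}$, which is contained in $M^l$. Since $M^l$ is self-adjoint, $J_0^l$ is essentially self-adjoint if and only if $\mathcal{P}$ is a core for $M^l$, i.e.\ dense in $\dom(M^l)$ in the graph norm
\[
  \norm{f}^2_{\mathrm{graph}} = \int_\reals(1+t^{2l})\abs{f(t)}^2\,d\rho(t),
\]
which is just the natural norm of $L_2(\reals,(1+t^{2l})\rho)$. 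So $J_0^l$ is essentially self-adjoint precisely when polynomials are dense in $L_2(\reals,(1+t^{2l})\rho)$, equivalently when $(1+t^{2l})\rho$ is $N$-extremal.

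Next I would apply Proposition~\ref{prop:berg-duran} with $r(t):=t^l+i$: its $l$ zeros (the $l$-th roots of $-i$) are simple and non-real, hence outside $\supp\rho\subset\reals$, and for $t\in\reals$ one has $\abs{r(t)}^2=1+t^{2l}$. Thus $(1+t^{2l})\rho=\abs{r}^2\rho$ and Proposition~\ref{prop:berg-duran} tells us that $\abs{r}^2\rho$ is $N$-extremal precisely in its cases (a) and (b), i.e.\ precisely when $\ind\rho\ge l-1$. Combining with the previous paragraph gives the boxed equivalence. The proposition then follows, since $\ind\rho=k$ is exactly the conjunction $\ind\rho\ge l-1$ for $l\le k+1$ together with $\ind\rho<(k+2)-1$, and $\ind\rho=\infty$ is $\ind\rho\ge l-1$ for every $l\in\nats$.

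The main obstacle I anticipate is the domain bookkeeping in the first step: one must verify that $M^l$ on $\dom(M^l)$ is genuinely self-adjoint (rather than just symmetric) and that it is exactly $\Phi^{-1}J^l\Phi$, so that essential self-adjointness of $J_0^l$ is faithfully captured by the density of $\mathcal{P}$ in the graph norm. Once this identification is secure, the passage to Proposition~\ref{prop:berg-duran} is automatic via the factorization $1+t^{2l}=\abs{t^l+i}^2$, and no further analytic work is needed.
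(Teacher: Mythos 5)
Your argument is correct, and it follows the same skeleton as the paper's proof: both reduce essential self-adjointness of $J_0^l$ to density of the polynomials in $L_2(\reals,(1+t^{2l})\rho)$ and then translate that density into the condition $\ind\rho\ge l-1$. The differences lie in how the two arrows are justified. For the first arrow the paper uses the basic range criterion $\cc{\ran(J_0^l\pm iI)}=l_2(\nats)$, transports it to $\cc{(t^l\pm i)\mathbb{P}}=L_2(\reals,\rho)$, and then invokes a Lemma of Berg and Dur\'an from \cite{MR1353377} to pass to density of $\mathbb{P}$ in the weighted space; you instead observe that $J_0^l$ sits inside the self-adjoint operator $M^l$ of multiplication by $t^l$ and use the core/graph-norm characterization, which gives the same weighted-density statement directly and makes the proof self-contained at this step (provided you do carry out the domain check you flag, namely that $\dom(M^l)$ coincides with the maximal domain of multiplication by $t^l$, which follows from $\abs{t}^{2j}\le 1+t^{2l}$ for $j\le l$). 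For the second arrow the paper compares $1+t^{2l}$ with $(1+t^2)^l=\abs{t-i}^{2l}$ via the elementary two-sided bound and then appeals to Definition~\ref{def:index}; you instead factor $1+t^{2l}=\abs{t^l+i}^2$ and apply Proposition~\ref{prop:berg-duran} to $r(t)=t^l+i$, whose $l$ roots are simple and non-real. Your route here is arguably cleaner, since it makes explicit (via cases (a) and (b) of Proposition~\ref{prop:berg-duran}) that $N$-extremality of the weighted measure corresponds to $\ind\rho\ge l-1$ and not merely to its determinacy, a point the paper's appeal to Definition~\ref{def:index} passes over somewhat quickly. The final bookkeeping deducing both assertions from the equivalence $J_0^l$ essentially self-adjoint $\Leftrightarrow$ $\ind\rho\ge l-1$ is identical in both proofs.
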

\begin{proof}
  Let $\mathbb{P}$ be the set of all polynomials, i.\,e.,
  \begin{equation*}
    \mathbb{P}=\left\{\sum_{k=0}^Na_kt^k:\ N\in\nats\cup\{0\},
   \ t\in\reals,\ a_k\in\complex\right\}\,.
  \end{equation*}
  The unitary map
  $\Phi$ introduced in Section~\ref{sec:jacobi_operators} satisfies
  \eqref{eq:unitary-map} and therefore
  \begin{equation*}
    \Phi\mathbb{P}=\dom(J_0)\,.
  \end{equation*}
  $J_0^l$ is essentially self-adjoint if and only if
  $\cc{\ran(J_0^l\pm iI)}=l_2(\nats)$ \cite[Cor. to Thm
  VIII.3]{MR751959}. This implies, by means of the unitary map $\Phi$,
  that this happens if and only if
  \begin{equation}
    \label{eq:density-aux}
    \cc{(t^l\pm
    i)\mathbb{P}}=L_2(\reals,\rho)\,.
  \end{equation}
By \cite[Lemma]{MR1353377}, \eqref{eq:density-aux} is equivalent to
\begin{equation}
\label{eq:density-aux-2}
  \cc{\mathbb{P}}=L_2(\reals,(1+t^{2l})\rho)\,.
\end{equation}
It follows from
\begin{equation*}
  1\le\frac{(1+x^2)^l}{1+x^{2l}}\le 2^{l-1}
\end{equation*}
that the polynomials are dense in $L_2(\reals,(1+t^{2l})\rho)$ if and
only if they are dense in $L_2(\reals,(1+t^{2})^l\rho)$. Thus, by
Definition~\ref{def:index}, $\ind\rho=k$ if and only if
\eqref{eq:density-aux-2} is satisfied for $l=1,\dots,k+1$ but does not
hold for $l=k+2$.
\end{proof}

As a consequence of the previous proposition, one has the following
assertion.
\begin{corollary}
  \label{cor:rho-sigma-n}
  Let $J$ be a Jacobi operator and $\rho$ its spectral measure.  If
  $\rho$ is determinate, then the index of determinacy of $\rho$
  coincides with the index of determinacy of $\sigma_n$, as defined in
  (\ref {eq:def-sigma-n}), for any $n\in\nats$. The measure $\rho$ is
  indeterminate $N$-extremal if and only if $\sigma_n$ is
  indeterminate $N$-extremal.
\end{corollary}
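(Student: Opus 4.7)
The approach is to reduce both claims to comparing essential self-adjointness of powers of minimal symmetric Jacobi operators via Proposition~\ref{prop:berg-duran-operator-powers}. Since $J_n^+$ is itself a Jacobi operator---a self-adjoint extension of the minimal symmetric operator $(J_n^+)_0$ in $\mathcal{F}_n^\perp$ whose matrix representation is obtained from \eqref{eq:jm-0} by removing its first $n$ rows and columns (Remark~\ref{rem:same-boundary})---and its spectral measure with respect to the distinguished first basis vector $e_{n+1}$ is exactly $\sigma_n$ (cf.\ Definition~\ref{def:submatrices} and \eqref{eq:def-sigma-n}), Proposition~\ref{prop:berg-duran-operator-powers} applied separately to the pairs $(J_0,\rho)$ and $((J_n^+)_0,\sigma_n)$ shows that both conclusions (equality of indices in the determinate case, and the indeterminate $N$-extremal equivalence) will follow once we prove, for every $l\in\nats$, that $J_0^l$ is essentially self-adjoint if and only if $(J_n^+)_0^l$ is.

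To compare these operators inside the same Hilbert space, embed $(J_n^+)_0$ trivially as $T:=0_{\mathcal{F}_n}\oplus (J_n^+)_0$ acting on $l_2(\nats)$ with the finite sequences as its domain, which is also $\dom(J_0)$. A direct inspection from \eqref{eq:difference-expr} shows that $V:=J_0-T$ is a bounded self-adjoint operator whose range is contained in $\mathcal{F}_{n+1}$, hence of finite rank. Since the finite-dimensional zero summand is trivially self-adjoint, $T^l$ is essentially self-adjoint if and only if $(J_n^+)_0^l$ is, so the task is further reduced to comparing the essential self-adjointness of $J_0^l=(T+V)^l$ with that of $T^l$.

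The main obstacle, and the heart of the argument, is to verify that the formal difference $(T+V)^l-T^l$---a sum of products of $T$'s and $V$'s containing at least one factor of $V$---extends from the finite sequences to a bounded operator on $l_2(\nats)$. Writing $V=\sum_p\lambda_p\inner{\cdot}{\phi_p}\phi_p$ with $\{\phi_p\}$ a finite family in $\mathcal{F}_{n+1}\subseteq\bigcap_a \dom(T^a)$, every occurrence of $V$ inside such a product collapses the intermediate vector into the fixed finite-dimensional subspace $\ran V$ via identities $VT^af=\sum_p\lambda_p\inner{f}{T^a\phi_p}\phi_p$, and the remaining $T$-factors on the left then act as bounded restrictions to finite-dimensional subspaces of the form $T^m\ran V$. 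Granted this boundedness, one has $\cc{J_0^l}=\cc{T^l}+[(T+V)^l-T^l]^{\mathrm{cl}}$, so a standard bounded-perturbation argument in the Kato--Rellich vein yields the required equivalence. Chaining this with Proposition~\ref{prop:berg-duran-operator-powers} then delivers both $\ind\rho=\ind\sigma_n$ in the determinate case and the simultaneous indeterminate $N$-extremality of $\rho$ and $\sigma_n$.
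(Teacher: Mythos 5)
Your proposal is correct and follows essentially the same route as the paper: the paper likewise forms $B:=\mathbb{O}\oplus J_n^+\eval{\dom(J_0)}$ (your $T$), notes that $J_0^l-B^l$ is a finite-rank symmetric operator for each $l$, invokes the Kato--Rellich theorem to get simultaneous essential self-adjointness of the $l$-th powers, and concludes with Proposition~\ref{prop:berg-duran-operator-powers}. Your expansion of $(T+V)^l-T^l$ into products collapsing through $\ran V$ merely supplies the detail the paper dispatches with the remark that, the matrix being tridiagonal, the difference of powers is of finite rank.
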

\begin{proof}
  Define $B:=\mathbb{O}\oplus J_n^+\eval{\dom(J_0)}$, where
  $\mathbb{O}$ is the null operator in $\Span\{e_k\}_{k=1}^n$ and
  $\oplus$ indicates that we are considering the orthogonal sum of
  operators (see \cite[Sec.\,3.6]{MR1192782}).  Note that the domain
  of $J_0^l$ and $B^l$ is the same for all $l\in\nats$. Since the
  matrix corresponding to the operator $J_0$ is tridiagonal, there
  exists a finite rank operator $C$ such that $B^l+C=J_0^l$ for any
  $l\in\nats$. Note that the rank of $C$ depends on $l$ and $n$. By
  the Kato-Rellich theorem (see \cite[Chap.\,5, Sec.\,4,
  Thm.\,4.4]{MR0407617}) $B^l$ is essentially self-adjoint if and only
  if $J_0^l$ is essentially self-adjoint. Now, since
  $B^l=\mathbb{O}^l\oplus (J_n^+\eval{\dom(J_0)})^l$, $B^l$ is
  essentially self-adjoint if and only if $(J_n^+\eval{\dom(J_0))^l}$
  is essentially self-adjoint and the result follows from Proposition
  \ref {prop:berg-duran-operator-powers}.
\end{proof}
\begin{corollary}
  \label{cor:finite-rank-pert-index-determinacy}
  Let $J$ and $\widetilde{J}$ be Jacobi operators as defined in
  Section~\ref{sec:jacobi_operators} such that $\widetilde{J}= J+C$,
  where $\rank(C)<\infty$, and denote by $\rho$ and $\widetilde{\rho}$
  the corresponding spectral measures. If $\rho$ is determinate, then
  $\widetilde{\rho}$ is determinate and
  $\ind\rho=\ind\widetilde{\rho}$. If $\rho$ is indeterminate
  $N$-extremal, then $\widetilde{\rho}$ is indeterminate $N$-extremal.
\end{corollary}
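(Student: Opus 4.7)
The plan is to imitate the argument given for Corollary~\ref{cor:rho-sigma-n}, showing that essential self-adjointness of $J_0^l$ and $\widetilde{J}_0^l$ are equivalent for every $l\in\nats$, and then appealing to Proposition~\ref{prop:berg-duran-operator-powers}.

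First, I would observe that since both $J$ and $\widetilde{J}$ are tridiagonal with respect to $\{e_k\}_{k=1}^\infty$, the difference $C=\widetilde{J}-J$ is also tridiagonal; combined with $\rank(C)<\infty$, this forces the nonzero entries of $C$ to lie in a finite top-left block, say within $\mathcal{F}_N$ for some $N\in\nats$. In particular $C$ is bounded, with both its range and its effective support (i.\,e., its orthogonal complement to the kernel) contained in $\mathcal{F}_N$. Restricting to the common domain of finite sequences, one has $\widetilde{J}_0 = J_0 + C$, so that $\widetilde{J}_0^l$ is well-defined on $\dom(J_0^l)$ for every $l$.

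Next, I would expand
\begin{equation*}
  \widetilde{J}_0^l - J_0^l = (J_0 + C)^l - J_0^l
\end{equation*}
as a finite sum of products, each containing at least one factor of $C$. Each such term is of the form $J_0^{a_1}CJ_0^{a_2}C\cdots CJ_0^{a_{k+1}}$. Because $C$ has range inside $\mathcal{F}_N$ and the tridiagonal operator $J_0^{a}$ maps $\mathcal{F}_N$ into $\mathcal{F}_{N+a}$, every $J_0$-factor sitting to the \emph{left} of a $C$ acts on a fixed finite-dimensional subspace, hence gives a bounded contribution; every $J_0$-factor sitting to the \emph{right} of a $C$ is composed with a projection onto a finite-dimensional subspace (since $C$ then truncates it). Therefore each product extends to a bounded operator on $l_2(\nats)$, and so does $\widetilde{J}_0^l - J_0^l$, which is moreover of finite rank.

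By the Kato-Rellich theorem (bounded symmetric perturbations preserve essential self-adjointness), $\widetilde{J}_0^l$ is essentially self-adjoint if and only if $J_0^l$ is. Proposition~\ref{prop:berg-duran-operator-powers} then yields both conclusions: when $\rho$ is determinate (the case $l=1$ giving $\widetilde{\rho}$ determinate, and the general $l$ giving the equality of indices); when $\rho$ is indeterminate $N$-extremal, $J_0$ is not essentially self-adjoint, so neither is $\widetilde{J}_0$, hence $\widetilde{\rho}$ is indeterminate, and its $N$-extremality is automatic since $\widetilde{\rho}$ is the spectral measure of the Jacobi operator $\widetilde{J}$ (Theorem~\ref{thm:nec-suf-cond-for-meausure}). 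The main obstacle is the boundedness claim in the second step: it is not automatic that compositions of unbounded $J_0^{a}$ with the finite-rank $C$ yield bounded operators, and one must exploit the tridiagonal structure together with the finite-block support of $C$ to see that every $J_0$-power appearing in the expansion is effectively applied to, or cut off by, a finite-dimensional subspace.
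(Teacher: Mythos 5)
Your proof is correct, but it takes a different route from the one in the paper. The paper's own argument is a short reduction: since $C$ is finite rank and both matrices are tridiagonal, the truncated operators of Definition~\ref{def:submatrices} satisfy $J_n^+=\widetilde{J}_n^+$ for $n$ large enough; the limit point/limit circle type of a Jacobi matrix is determined by any tail (a fact quoted from Akhiezer), which yields the determinate versus indeterminate $N$-extremal dichotomy, and then Corollary~\ref{cor:rho-sigma-n} (namely $\ind\rho=\ind\sigma_n$) applied to both operators gives the equality of indices. You instead bypass $J_n^+$ and $\sigma_n$ entirely and compare $J_0^l$ with $\widetilde{J}_0^l$ directly, showing their difference is a bounded finite-rank symmetric operator and invoking Kato--Rellich together with Proposition~\ref{prop:berg-duran-operator-powers}. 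This is precisely the technique the paper uses to prove Corollary~\ref{cor:rho-sigma-n} itself, and the Kato--Rellich route for the $l=1$ dichotomy is anticipated in Remark~\ref{rem:example}; so your argument is in effect a transplanted, self-contained version of the paper's chain of reasoning, trading the citation of the tail-invariance of the limit point/circle classification and the intermediate measure $\sigma_n$ for some operator-power bookkeeping. That bookkeeping is sound: writing $C=P_NCP_N$ with $P_N$ the projection onto $\mathcal{F}_N$, each cross term $J_0^{a_1}C\cdots CJ_0^{a_{k+1}}$ is bounded and finite rank because the band structure gives $P_NJ_0^{a}=P_NJ_0^{a}P_{N+a}$, and the $N$-extremality of $\widetilde{\rho}$ in the indeterminate case is indeed automatic from Theorem~\ref{thm:nec-suf-cond-for-meausure}.
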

\begin{proof}
  Since $\rank(C)<\infty$, there is $n\in\nats$ such that
  $J_n^+=\widetilde{J}_n^+$. Therefore, taking into account that,
  according to \cite[Addenda and Problems to Chap.\,1]{MR0184042}, the
  matrix representations of $J$ and $J_n^+$ are simultaneously either
  limit circle case or limit point case, one concludes that $\rho$ and
  $\widetilde{\rho}$ are simultaneously either determinate or
  indeterminate $N$-extremal. If $\rho$ is determinate, then the
  assertion follows from Corollary~\ref{cor:rho-sigma-n}.
\end{proof}

\begin{remark}
  \label{rem:example}
  In the previous proof, one could have used \cite[Chap.\,5, Sec.\,4,
  Thms.\,4.3 and 4.4]{MR0407617} (Kato-Rellich theorem) to show that
  $J_0$ and $\widetilde{J}_0$ are simultaneously either essentially
  self-adjoint or not.
\end{remark}

The next assertion uses the measures introduced in (\ref
{eq:def-sigma-n}), (\ref{eq:def-mu-n}) and (\ref{eq:def-rho-n}).
\begin{theorem}
  \label{thm:oaxaca}
  For the measure $\rho_n$ to be determinate with index of determinacy
  $k$ (indeterminate
  $N$-extremal) it is necessary and sufficient that $\sigma_n+\mu_n$ is
  determinate with index of determinacy $k$ (indeterminate $N$-extremal).
\end{theorem}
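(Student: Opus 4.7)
The plan is to reduce both sides of the claimed equivalence to a single statement about $\rho$ (equivalently, about $\sigma_n$), using a common integer parameter $l := \card(\{\text{zeros of }\pi_n\}\setminus\supp\rho)$. First I would record the two structural facts: $\rho_n=\pi_n^2\rho$, which is immediate from comparing \eqref{eq:green-integral} and \eqref{eq:def-rho-n}; and $\mu_n$ is a finite atomic measure supported on $\sigma(J_n^-)=\{\text{zeros of }\pi_n\}$ (Remark~\ref{rem:zeros-poly}). Since $\supp\sigma_n=\sigma(J_n^+)$ and Remark~\ref{rem:intersections-spectra} gives $\sigma(J_n^-)\cap\sigma(J_n^+)=\sigma(J_n^-)\cap\sigma(J)$, the atoms of $\mu_n$ lying outside $\supp\sigma_n$ are precisely the zeros of $\pi_n$ outside $\supp\rho$, so the same $l$ appears in the analyses of both $\rho_n$ and $\sigma_n+\mu_n$.

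For the $\rho_n$ side, I would apply Proposition~\ref{prop:berg-duran} with $r=\pi_n$ and $\mu=\rho$. Parts (a) and (b) give: $\rho_n$ is determinate with index $k$ iff $\rho$ is determinate with $\ind\rho=k+l$, and $\rho_n$ is indeterminate $N$-extremal iff $\rho$ is determinate with $\ind\rho=l-1$. For the $\sigma_n+\mu_n$ side, I would decompose $\mu_n=\beta+\gamma$, where $\beta$ collects the atoms of $\mu_n$ outside $\supp\sigma_n$ (there are $l$ of them) and $\gamma$ collects those in $\sigma(J_n^-)\cap\sigma(J)\subset\supp\sigma_n$. Then $\sigma_n+\mu_n$ differs from $\sigma_n+\beta$ only in finitely many positive weights. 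Applying Proposition~\ref{prop:berg-duran-sum}(a,b) to $\sigma_n$ and $\beta$ (with the same $l$) yields the analogous dichotomy for $\sigma_n+\beta$, and Proposition~\ref{prop:finite-change-weights} (together with the $N$-extremal preservation result cited in its proof) transfers these conclusions from $\sigma_n+\beta$ to $\sigma_n+\mu_n$.

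To close the loop, I would invoke Corollary~\ref{cor:rho-sigma-n}, which gives $\ind\rho=\ind\sigma_n$ in the determinate case and matches the indeterminate $N$-extremal cases. Since $\rho$ is the spectral measure of a Jacobi operator, Theorem~\ref{thm:nec-suf-cond-for-meausure} ensures the polynomials are dense in $L_2(\reals,\rho)$, so by the M.~Riesz theorem cited in the introduction $\rho$ cannot be indeterminate non-$N$-extremal; the only possibilities are those handled by Corollary~\ref{cor:rho-sigma-n}. Chaining the three equivalences ($\rho_n$ to $\rho$, $\rho$ to $\sigma_n$, $\sigma_n$ to $\sigma_n+\mu_n$) produces the theorem in both the determinate-with-index-$k$ and indeterminate $N$-extremal cases.

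The main technical obstacle will be the handling of the atoms of $\mu_n$ that coincide with atoms of $\sigma_n$, namely those in $\sigma(J_n^-)\cap\sigma(J)$. At these points $\sigma_n+\mu_n$ carries the combined positive weight $\sigma_n(\{\xi\})+\mu_n(\{\xi\})$, so they are weight modifications of $\sigma_n$ rather than genuinely new atoms, and Proposition~\ref{prop:berg-duran-sum} cannot be applied directly to them. It is Proposition~\ref{prop:finite-change-weights} (and, for the $N$-extremal statement, the fact from \cite[Thm.\,5(b)]{MR1343638} used in its proof) that absorbs these finitely many weight changes without altering either the index of determinacy or the $N$-extremality status, and this is where the proof does its most delicate bookkeeping.
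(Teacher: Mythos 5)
Your proposal is correct and follows essentially the same route as the paper: the same three-link chain of equivalences $\rho_n\leftrightarrow\rho\leftrightarrow\sigma_n\leftrightarrow\sigma_n+\mu_n$ via Proposition~\ref{prop:berg-duran} (with $r=\pi_n$), Corollary~\ref{cor:rho-sigma-n}, Remark~\ref{rem:intersections-spectra}, and Proposition~\ref{prop:berg-duran-sum}. The only divergence is your final step: where you split $\mu_n$ into new atoms plus weight modifications and invoke Proposition~\ref{prop:finite-change-weights} for the atoms of $\mu_n$ lying in $\supp\sigma_n$, the paper applies Proposition~\ref{prop:berg-duran-sum} in one shot, since that proposition is already formulated for a finite set $\mathcal{F}$ that may meet $\supp\mu$, with only $l=\card\{\xi\in\mathcal{F}\text{ outside }\supp\mu\}$ entering the index count.
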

\begin{proof}
  Due to Proposition~\ref{prop:berg-duran}, the $\ind\rho_n=k$ if and
  only if $$\ind\rho=k+\card(\sigma(J_n^-)\setminus\sigma(J))\,.$$
  This is so, because the set of zeros of $\pi_n$ is the spectrum of
  $J_n^-$ (see Remark~\ref{rem:zeros-poly}).  Since, according to
  Corollary~\ref{cor:rho-sigma-n}, $\ind\rho=\ind\sigma_n$ one has,
  using Proposition~\ref{prop:berg-duran-sum},
  \begin{equation*}
   \ind(\sigma_n +\mu_n)=k+\card(\sigma(J_n^-)\setminus\sigma(J))-
   \card(\sigma(J_n^-)\setminus\sigma(J_n^+))\,.
  \end{equation*}
  In view of Remark~\ref{rem:intersections-spectra} and
  Proposition~\ref{prop:berg-duran-sum}(b), the last expression yields that
  $\ind(\sigma_n +\mu_n)=k$ if and only if $\ind\rho_n=k$.

  By Proposition~\ref{prop:berg-duran}(a), the measure $\rho_n$ is
  indeterminate $N$-extremal if and only if
  \begin{equation*}
    \ind\rho=\card\{\text{zeros of $\pi_n$ outside }\supp\rho\}-1\,.
  \end{equation*}
Using Remark~\ref{rem:zeros-poly} and Corollary~\ref{cor:rho-sigma-n},
one concludes that the last expression is equivalent to
\begin{equation*}
  \ind\sigma_n= \card(\sigma(J_n^-)\setminus\sigma(J))-1=
\card(\sigma(J_n^-)\setminus\sigma(J_n^+))-1\,.
\end{equation*}
This happens if and only if that $\sigma_n+\mu_n$ is indeterminate $N$-extremal by
Proposition~\ref{prop:berg-duran-sum}(a).
\end{proof}
\begin{corollary}
  \label{cor:last-one}
  Let $\rho$ be the spectral measure of some Jacobi operator $J$ as in
  \eqref{eq:spectral-function-def}. Define the measure $\beta$ by
  \eqref{eq:beta} with
  \begin{equation*}
    \card(\mathcal{F}\setminus\supp\rho)=\card(\sigma(J_n^-)\setminus\sigma(J))\,.
  \end{equation*}
 The measure $\rho_n$ has index of determinacy $k$ (is indeterminate
 $N$-extremal) if and only if $\rho+\beta$ has index of determinacy $k$ (is indeterminate
 $N$-extremal).
\end{corollary}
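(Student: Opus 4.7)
The plan is to derive Corollary~\ref{cor:last-one} as an immediate consequence of Propositions~\ref{prop:berg-duran} and~\ref{prop:berg-duran-sum}, by exploiting the fact that both invoke the \emph{same} combinatorial quantity
\begin{equation*}
  l:=\card(\sigma(J_n^-)\setminus\sigma(J))=\card(\mathcal{F}\setminus\supp\rho),
\end{equation*}
where the first equality is simply a restatement using $\supp\rho=\sigma(J)$, and the second is the hypothesis of the corollary.

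First I would identify $\rho_n$ as the weighted measure $\pi_n^2\rho$: combining \eqref{eq:green-integral} with \eqref{eq:def-rho-n} and invoking the uniqueness in the Nevanlinna representation gives $d\rho_n=\pi_n^2d\rho$. Remark~\ref{rem:zeros-poly} identifies the zero set of $\pi_n$ with $\sigma(J_n^-)$; since $J_n^-$ is a finite Jacobi matrix with positive off-diagonal entries, its spectrum is simple, so $\pi_n$ has simple zeros. Hence Proposition~\ref{prop:berg-duran} applies with $r=\pi_n$ and $\mu=\rho$, yielding
\begin{itemize}
\item $\rho_n$ is indeterminate $N$-extremal $\iff$ $\rho$ is determinate with $\ind\rho=l-1$;
\item $\rho_n$ is determinate with $\ind\rho_n=k$ $\iff$ $\rho$ is determinate with $\ind\rho=k+l$.
\end{itemize}

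Next I would apply Proposition~\ref{prop:berg-duran-sum} with $\mu=\rho$ and the same finite set $\mathcal{F}$; by the hypothesis of the corollary the parameter $l$ appearing in that proposition coincides with the $l$ above. This gives
\begin{itemize}
\item $\rho+\beta$ is indeterminate $N$-extremal $\iff$ $\rho$ is determinate with $\ind\rho=l-1$;
\item $\rho+\beta$ is determinate with $\ind(\rho+\beta)=k$ $\iff$ $\rho$ is determinate with $\ind\rho=k+l$.
\end{itemize}
Matching the two pairs of equivalences through the common condition on $\rho$ produces exactly the assertion of the corollary.

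There is no real obstacle here; the only point requiring care is to verify that the integer $l$ from Proposition~\ref{prop:berg-duran} (number of zeros of $\pi_n$ outside $\supp\rho$) and the integer $l$ from Proposition~\ref{prop:berg-duran-sum} (number of atoms of $\beta$ outside $\supp\rho$) are forced to be equal by the standing hypothesis of the corollary, which is precisely what Remark~\ref{rem:zeros-poly} together with the equation $\card(\mathcal{F}\setminus\supp\rho)=\card(\sigma(J_n^-)\setminus\sigma(J))$ guarantees.
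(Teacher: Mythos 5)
Your argument is correct, but it follows a genuinely more direct route than the paper's. The paper obtains this corollary from Theorem~\ref{thm:oaxaca} (the equivalence between $\rho_n$ and $\sigma_n+\mu_n$), combined with Corollary~\ref{cor:rho-sigma-n} (the identity $\ind\rho=\ind\sigma_n$, which rests on essential self-adjointness of powers of truncated operators) and the bookkeeping of the proof of Proposition~\ref{prop:berg-duran-sum}(b); in effect it passes from $\rho_n$ to $\sigma_n+\mu_n$, then to $\sigma_n$, then to $\rho$, then to $\rho+\beta$. You shortcut the middle of that chain: after identifying $\rho_n=\pi_n^2\rho$ from \eqref{eq:green-integral} and \eqref{eq:def-rho-n} and noting via Remark~\ref{rem:zeros-poly} that $\pi_n$ has the simple zero set $\sigma(J_n^-)$, you apply Proposition~\ref{prop:berg-duran} with $r=\pi_n$ and Proposition~\ref{prop:berg-duran-sum} with the set $\mathcal{F}$, and splice the two biconditionals through the common condition on $\ind\rho$, the hypothesis $\card(\mathcal{F}\setminus\supp\rho)=\card(\sigma(J_n^-)\setminus\sigma(J))$ forcing the two parameters $l$ to coincide. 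The first half of your argument is in fact the opening step of the paper's proof of Theorem~\ref{thm:oaxaca}, so you have essentially inlined and simplified that theorem for the present purpose, eliminating the operator-theoretic input of Corollary~\ref{cor:rho-sigma-n} and making the corollary a purely measure-theoretic consequence of the two Berg--Dur\'an propositions. What the paper's detour buys is reuse of Theorem~\ref{thm:oaxaca}, which is of independent interest; what yours buys is brevity and self-containment. One small point worth stating explicitly: when $\rho$ is indeterminate, part (c) of each proposition shows that both $\rho_n$ and $\rho+\beta$ are indeterminate and not $N$-extremal, so both sides of the claimed equivalences are vacuously false; your matching covers this case implicitly.
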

\begin{proof}
  The assertion follows from Theorem~\ref{thm:oaxaca}, taking into
  account Corollary~\ref{cor:rho-sigma-n} and using the same reasoning
  as in the proof of Proposition~\ref{prop:berg-duran-sum}(b).
\end{proof}
\section{Inverse spectral problems}
  \label{subsec:sol-inv-problem}
Let $J$ be the Jacobi operator associated with the matrix
\eqref{eq:jm-0} as in Section~\ref{sec:jacobi_operators}. Fix
$n\in\nats\setminus\{1\}$ and consider, along with the self-adjoint operator $J$,
the operator
\begin{equation}
\label{eq:def-tilde-j}
\begin{split}
  \widetilde{J}(n)=J &+
  [q_n(\theta^2-1)+\theta^2h]\inner{e_n}{\cdot}e_n \\
  &+  b_n(\theta-1)(\inner{e_n}{\cdot}e_{n+1} +
  \inner{e_{n+1}}{\cdot}e_n) \\
  &+ b_{n-1}(\theta-1)(\inner{e_{n-1}}{\cdot}e_{n} +
  \inner{e_{n}}{\cdot}e_{n-1})
  \,,\quad \theta>0\,,
  \quad h\in\mathbb{R}\,,
\end{split}
\end{equation}
where it has been assumed that $b_0=0$. Clearly, $\widetilde{J}(n)$ is a
self-adjoint extension of the operator whose matrix representation
with respect to the canonical basis in $l_2(\mathbb{N})$ is a Jacobi
matrix obtained from (\ref{eq:jm-0}) by modifying the entries
$b_{n-1},q_n,b_n$. For instance, if $n>2$, $\widetilde{J}(n)$ is a
self-adjoint extension (possibly not proper) of the operator whose
matrix representation is
\begin{equation}
  \label{eq:jm-1}
  \begin{pmatrix}
q_1 & b_1 & 0 & 0 & 0 & 0 & \cdots \\[1mm]
b_1 & \ddots & \ddots & 0 & 0 & 0 & \cdots \\[1mm]
0  &  \ddots  & q_{n-1} & \theta b_{n-1} & 0 & 0 & \cdots\\
0 & 0 & \theta b_{n-1} & \theta^2(q_n+h) & \theta b_n & 0 & \cdots\\
0 & 0 & 0 & \theta b_{n} & q_{n+1} &  b_{n+1} & \\
0 & 0 & 0 & 0 & b_{n+1} & q_{n+2} & \ddots\\
\vdots & \vdots & \vdots & \vdots & & \ddots & \ddots
  \end{pmatrix}\,.
\end{equation}
Note that $\widetilde{J}(n)$ is obtained from $J$ by a rank-three
perturbation when $n>1$, and a rank-two perturbation otherwise.

Define
\begin{equation}
  \label{eq:gamma-hold}
  \gamma:=\frac{\theta^2h}{1-\theta^2}\,.
\end{equation}
Consider the following inverse problem:

Given two sequences $S$ and $\widetilde{S}$ without finite points of
accumulation, $n\in\nats\setminus\{1\}$ and $\gamma\in\reals\setminus S$, find a
Jacobi operator $J$ and parameters $\theta$ and $h$ such that
$\sigma(J)=S$ and $\sigma(\widetilde{J}(n))=\widetilde{S}$ and
\eqref{eq:gamma-hold} holds. We denote
this inverse spectral problem by $(S,\widetilde{S},n,\gamma)$. The
operator $J$ is called a solution of the inverse problem $(S,\widetilde{S},n,\gamma)$.

When $n>1$, it was shown in \cite[Thms.\,5.6]{MR3634444} that
if there is a solution, then there is an infinite set of
solutions. Necessary and sufficient conditions on $S$ and
$\widetilde{S}$ for the existence of solutions of the inverse problem
are given in \cite[Thms.\,5.9]{MR3634444}.

\begin{remark}
\label{rem:same-green}
All solutions of this inverse spectral problem have the same
Green function at $n$ \cite[Prop.\,5.3]{MR3634444} given by
\begin{equation*}
  G(z,n)=\frac{\mathfrak{M}_n(z)-\theta^2}{(1-\theta^2)(\gamma-z)}
\end{equation*}
(see \cite[Eq.\,4.2]{MR3634444}), where the function
$\mathfrak{M}_n$ is univocally determined by the sequences $S$ and
$\widetilde{S}$ \cite[Prop.\,4.13]{MR3634444}. Moreover,
$\gamma$ and $\mathfrak{M}_n$ uniquely determine $\theta$ (see proof
of \cite[Prop.\,5.4]{MR3634444}).
\end{remark}
\begin{theorem}
  \label{thm:stability-determ-index-weights}
  Let $J$ and $\widehat{J}$ be Jacobi operators which solve the
  inverse problem $(S,\widetilde{S},n,\gamma)$ and $\rho$ and $\widehat{\rho}$ be the
  corresponding spectral measures. Then, either
  \begin{equation*}
    \ind\rho=\ind\widehat{\rho}
  \end{equation*}
or $\rho$ and $\widehat{\rho}$ are simultaneously indeterminate $N$-extremal.
\end{theorem}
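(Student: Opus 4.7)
The strategy is to reduce the claim to Corollary~\ref{cor:isospectral-matrices-equal-indices} (equivalently, Theorem~\ref{thm:main-minimal-equal-indices}) by exploiting the two invariants of the inverse problem $(S,\widetilde{S},n,\gamma)$. On the one hand, Remark~\ref{rem:same-green} tells me that every solution shares the same $n$-th Green function, so $G(z,n)=\widehat{G}(z,n)$; the uniqueness of the Nevanlinna representation of Herglotz functions applied here forces $\rho_n=\widehat{\rho}_n$. On the other hand, the very definition of a solution gives $\sigma(J)=\sigma(\widehat{J})=S$, whence $\supp\rho=\supp\widehat{\rho}$, and the operators $J,\widehat{J}$ are isospectral.

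Next I would verify the cardinality hypothesis \eqref{eq:equal-number-of-ceros-in-supp} of Theorem~\ref{thm:main-minimal-equal-indices}. A point $s\in S$ is a zero of $\pi_n$ exactly when $\rho_n(\{s\})=\pi_n^2(s)\rho(\{s\})=0$, so the zeros of $\pi_n$ inside $S$ are the atoms of $\rho$ that $\rho_n$ annihilates; the analogous description holds for $\widehat{\pi}_n$. Since $\rho_n=\widehat{\rho}_n$ and $\supp\rho=\supp\widehat{\rho}$, the two polynomials have the same zeros inside $S$, and because both have degree $n-1$, their numbers of zeros outside $S$ also agree. By Remark~\ref{rem:zeros-poly} this is exactly $\card(\sigma(J_n^-)\cap\sigma(J))=\card(\sigma(\widehat{J}_n^-)\cap\sigma(\widehat{J}))$.

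If $\rho_n=\widehat{\rho}_n$ is determinate or indeterminate $N$-extremal, then trivially $\ind\rho_n=\ind\widehat{\rho}_n$ or the two measures are simultaneously indeterminate $N$-extremal, so Theorem~\ref{thm:main-minimal-equal-indices} delivers $\ind\rho=\ind\widehat{\rho}$, which is one of the two outcomes sought. The main obstacle is the remaining case in which $\rho_n$ is indeterminate but not $N$-extremal, since then Theorem~\ref{thm:main-minimal-equal-indices} does not apply as stated. There, Proposition~\ref{prop:berg-duran}(c) forces each of $\rho,\widehat{\rho}$ to be either indeterminate or determinate with index strictly less than $l-1$, where $l$ is the common number of zeros of $\pi_n,\widehat{\pi}_n$ outside $S$; since $\rho$ and $\widehat{\rho}$ are spectral measures of Jacobi operators, Theorem~\ref{thm:nec-suf-cond-for-meausure} guarantees that they are $N$-extremal, so each is either indeterminate $N$-extremal or determinate with small index. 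To align these possibilities I would use Corollary~\ref{cor:rho-sigma-n} to transfer the comparison to $\sigma_n$ versus $\widehat{\sigma}_n$, and then use Proposition~\ref{prop:Gkk-formula} together with the asymptotics~\eqref{eq:G=-asymptotics} applied to the common Green function $G(z,n)$ to pin down the boundary data at index $n$, forcing the two minimal Jacobi operators to have matching determinacy status and completing the proof.
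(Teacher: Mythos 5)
Your reduction to Theorem~\ref{thm:main-minimal-equal-indices} is correct as far as it goes, and it is a genuinely different route from the paper's: from $G(z,n)=\widehat{G}(z,n)$ and Stieltjes inversion you get $\rho_n=\widehat{\rho}_n$, and your degree-counting argument (atoms of $\rho$ annihilated by $\rho_n$ are the zeros of $\pi_n$ in $S$, all zeros of $\pi_n$ are real and simple, $\deg\pi_n=\deg\widehat{\pi}_n=n-1$) does establish \eqref{eq:equal-number-of-ceros-in-supp}. This disposes of the cases where $\rho_n$ is determinate or indeterminate $N$-extremal, and there it even yields the stronger conclusion that both $\rho$ and $\widehat{\rho}$ are determinate with equal index.

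However, the remaining case --- $\rho_n$ indeterminate and not $N$-extremal --- is precisely where the content of the theorem lives, and your proposed resolution is not an argument. In that case Proposition~\ref{prop:berg-duran}(c) only tells you that each of $\rho$, $\widehat{\rho}$ is either indeterminate $N$-extremal or determinate with index $<l-1$; nothing you have established rules out, say, $\rho$ indeterminate $N$-extremal while $\widehat{\rho}$ is determinate of index $0$. Note that $\rho$ and $\widehat{\rho}$ here have the same support but different weights, and Proposition~\ref{prop:changing-weights} shows that in general such a change can alter the index, so the matching really must be extracted from the specific structure at hand. The asymptotics \eqref{eq:G=-asymptotics} only encode the normalization $\int d\rho_n=1$ and ``pin down'' nothing that distinguishes these alternatives, and Corollary~\ref{cor:rho-sigma-n} merely transfers the question from $\rho$ to $\sigma_n$ without resolving it. The missing ingredient is Lemma~\ref{lem:stability-index-cardinality}: writing $-G(z,n)^{-1}=z-q_n+\sum_{\alpha\in\mathcal{I}}\eta_\alpha/(\alpha-z)$ via Proposition~\ref{prop:Gkk-formula}, both $b_n^2\sigma_n$ and $b_n^2\widehat{\sigma}_n$ arise from the \emph{single} measure $\sum_{\alpha\in\mathcal{I}}\eta_\alpha\delta_\alpha$ by deleting equally many atoms (those carried by $\mu_n$, respectively $\widehat{\mu}_n$), and the lemma is exactly the statement that such deletions force equal indices or simultaneous indeterminate $N$-extremality; Corollary~\ref{cor:rho-sigma-n} then carries this back to $\rho$ and $\widehat{\rho}$. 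This is the paper's proof, it covers all cases uniformly, and without it (or an equivalent stability statement) your last paragraph does not close.
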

\begin{proof}
  Due to Remark~\ref{rem:same-green}, $J$ and $\widehat{J}$ have the
  same function $G(z,n)$. According to
  \cite[Prop.\,3.5]{MR3634444}, one writes
  \begin{equation*}
    -G(z,n)^{-1}=z-q_n+\sum_{\alpha\in\mathcal{I}}\frac{\eta_\alpha}{\alpha-z}\,,
  \end{equation*}
where $\mathcal{I}$ is a discrete subset of $\reals$. By
Proposition~\ref{prop:Gkk-formula}
\begin{equation*}
 \sum_{\alpha\in\mathcal{I}}\frac{\eta_\alpha}{\alpha-z}=
\begin{cases}
  b_n^2m_n^+(z)+b_{n-1}^2m_n^-(z)& \\
  b_n^2\widehat{m}_n^+(z)+b_{n-1}^2\widehat{m}_n^-(z)\,,&
\end{cases}
\end{equation*}
where $m_n^\pm$ are given in Definition~\ref{def:submatrices} and
$\widehat{m}_n^\pm$ are the corresponding functions for
$\widehat{J}$. Thus, using the notation introduced in
\eqref{eq:def-sigma-n}, one has
\begin{equation*}
   b_n^2\sigma_n=\sum_{\alpha\in\mathcal{I}\setminus\mathcal{F}}\eta_\alpha\delta_\alpha\qquad
 b_n^2\widehat{\sigma}_n=\sum_{\alpha\in\mathcal{I}\setminus\widehat{\mathcal{F}}}\eta_\alpha\delta_\alpha
\end{equation*}
where $\widehat{\sigma}_n$ is defined as $\sigma_n$ for the function
$\widehat{m}^+_n$  and  $\card\mathcal{F}=\card\widehat{\mathcal{F}}=n$.
By Lemma~\ref{lem:stability-index-cardinality}, either
\begin{equation*}
  \ind\sigma_n=\ind\widehat{\sigma}_n
\end{equation*}
or $\sigma_n$ and $\widehat{\sigma}_n$ are simultaneously
indeterminate $N$-extremal. Thus, Corollary~\ref{cor:rho-sigma-n}
completes the proof.
\end{proof}
\begin{theorem}
  \label{thm:place-of-perturbation}
  Let $J$ and $J'$ be solutions of the inverse problems $(S,\widetilde{S},n,\gamma)$ and
  $(S,\widetilde{S},n',\gamma)$ respectively. Denote by $\rho$ and
  $\rho'$ the spectral measures corresponding to $J$ and $J'$ and
  assume that $\ind\rho<+\infty$. Either
  \begin{equation*}
    \ind\rho=\ind\rho'
  \end{equation*}
or $\rho$ and
  $\rho'$ are simultaneously indeterminate $N$-extremal if and only if
  \begin{equation*}
    n=n'\,.
  \end{equation*}
\end{theorem}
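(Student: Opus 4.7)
The easy direction ($n = n'$) is immediate, since both $J$ and $J'$ then solve the same inverse problem $(S,\widetilde{S},n,\gamma)$, and Theorem~\ref{thm:stability-determ-index-weights} applies directly. For the other direction I proceed by contrapositive: assuming $n \neq n'$ and $\ind\rho < +\infty$, I aim to show $\ind\rho \neq \ind\rho'$. Since $\ind\rho$ is finite, $\rho$ is determinate, so the clause ``$\rho, \rho'$ simultaneously indeterminate $N$-extremal'' is vacuously excluded.

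The central idea is to identify the two Green functions. By Remark~\ref{rem:same-green}, for each inverse problem the function $\mathfrak{M}_n$ is fully determined by the common data $S, \widetilde{S}$, and $\theta$ is in turn determined by $\gamma$ and $\mathfrak{M}_n$; hence the formula for $G(z,n)$ produces the same meromorphic function on both sides, and the associated Nevanlinna measures coincide: $\rho_n = \rho'_{n'}$ in the notation of \eqref{eq:def-rho-n}.

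Using $d\rho_n = \pi_n^2\,d\rho$ (from \eqref{eq:green-integral}) together with $\supp\rho = S$ and Remark~\ref{rem:zeros-poly} gives $\supp \rho_n = S \setminus \sigma(J_n^-)$, and analogously $\supp \rho'_{n'} = S \setminus \sigma((J')_{n'}^-)$. Equating supports forces $\sigma(J_n^-) \cap S = \sigma((J')_{n'}^-) \cap S$; call $s$ this common cardinality. Since $J_n^-$ has exactly $n-1$ simple eigenvalues (it is an $(n-1) \times (n-1)$ real symmetric tridiagonal matrix with positive off-diagonal entries), one obtains $C_n := \card(\sigma(J_n^-) \setminus S) = n - 1 - s$ and $C_{n'} := \card(\sigma((J')_{n'}^-) \setminus S) = n' - 1 - s$, so $C_n - C_{n'} = n - n' \neq 0$.

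Finally, Corollary~\ref{cor:last-one} combined with Proposition~\ref{prop:berg-duran-sum}(b) yields, in the main regime $\ind\rho \geq C_n$, the identity $\ind\rho_n = \ind\rho - C_n$, and analogously $\ind\rho' = \ind\rho'_{n'} + C_{n'}$; subtracting and using $\rho_n = \rho'_{n'}$ gives $\ind\rho - \ind\rho' = C_n - C_{n'} = n - n' \neq 0$. The principal obstacle is the edge-case analysis when $\ind\rho < C_n$: by Proposition~\ref{prop:berg-duran}(a),(c) the measure $\rho_n$ then becomes indeterminate, either $N$-extremal (if $\ind\rho = C_n - 1$) or not (if $\ind\rho < C_n - 1$), and one must propagate each possibility through the equality $\rho_n = \rho'_{n'}$ by invoking all three cases of Proposition~\ref{prop:berg-duran-sum}, in particular handling the delicate sub-case in which both $\rho$ and $\rho'$ are determinate with indices strictly below the respective thresholds $C_n - 1$ and $C_{n'} - 1$, and verifying that the compound condition still fails whenever $n \neq n'$.
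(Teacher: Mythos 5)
Your forward direction, the identification $G(\cdot,n)=G'(\cdot,n')$ and hence $\rho_n=\rho'_{n'}$, and the count $C_n-C_{n'}=n-n'$ are all correct. The gap is precisely the sub-case you flag at the end and do not carry out, and it is not a routine verification: when $\ind\rho<C_n-1$ and $\ind\rho'<C_{n'}-1$, Proposition~\ref{prop:berg-duran}(c) tells you only that $\rho_n$ and $\rho'_{n'}$ are both indeterminate and not $N$-extremal. That status is the same for \emph{every} value of $\ind\rho$ below the threshold $C_n-1$, so the equality $\rho_n=\rho'_{n'}$ yields no relation whatsoever between $\ind\rho$ and $\ind\rho'$ in this regime; there is nothing left to ``propagate''. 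Comparing $\rho$ and $\rho'$ directly does not rescue this either: they then have the same support but weights differing by the factor $\pi_n^2/(\pi'_{n'})^2$ on the common part, and Proposition~\ref{prop:changing-weights} shows that weight changes can alter the index, so none of the tools in the paper closes this case from your data. Moreover the case is not vacuous: it occurs whenever $\ind\rho$ is small compared with the number of zeros of $\pi_n$ off $\sigma(J)$, which is exactly the situation addressed by Corollary~\ref{cor:index-weyl-function}.

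The paper avoids these thresholds by not working with the Nevanlinna measure of $G(z,n)$ at all. It inverts the Green function via Proposition~\ref{prop:Gkk-formula}, so that $b_n^2\sigma_n$ and $(b'_{n'})^2\sigma'_{n'}$ are both obtained from one common discrete measure $\sum_{\alpha\in\mathcal{I}}\eta_\alpha\delta_\alpha$ by deleting finitely many mass points, with identical weights on the overlap, the number of deletions being determined by $n$, respectively $n'$. Since $\ind\sigma_n=\ind\rho$ exactly, with no case distinction (Corollary~\ref{cor:rho-sigma-n}), Lemma~\ref{lem:stability-index-cardinality} converts the hypothesis on the indices directly into equality of the cardinalities of the deleted sets, i.e.\ $n=n'$. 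To repair your argument you should replace $\rho_n$ by $\sigma_n$, i.e.\ pass from $G(z,n)$ to $-G(z,n)^{-1}$ in the same way.
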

\begin{proof}
  ($\Leftarrow$) This is
  Theorem~\ref{thm:stability-determ-index-weights}.

  ($\Rightarrow$) Since $S$ and $\widetilde{S}$ univocally determine
  $\mathfrak{M}_n$ (see Remark~\ref{rem:same-green}), one has
  \begin{equation*}
    \mathfrak{M}_n(z)=\mathfrak{M}_{n'}(z)\qquad\text{ for all }
z\in\complex\setminus(S\setminus\widetilde{S})\,.
  \end{equation*}
Again, Remark~\ref{rem:same-green} yields
\begin{equation*}
  G(z,n)=G'(z,n')\,,
\end{equation*}
where $G'(z,n')$ is the $n'$-th Green function of $J'$. Repeating the
argumentation of the previous theorem's proof, one arrives at
\begin{equation}
 \label{eq:sigma-sigma-prime}
     b_n^2\sigma_n=\sum_{\alpha\in\mathcal{I}\setminus\mathcal{F}}\eta_\alpha\delta_\alpha\qquad
 (b'_{n'})^2\sigma'_{n'}=\sum_{\alpha\in\mathcal{I}\setminus\mathcal{F}'}\eta_\alpha\delta_\alpha\,,
\end{equation}
where
\begin{equation}
  \label{eq:f-f-prime}
  n=\card\mathcal{F}\quad\text{ and }\quad n'=\card\mathcal{F}'
\end{equation}
The hypothesis and Corollary~\ref{cor:rho-sigma-n} imply that either
\begin{equation*}
  \ind\sigma_n=\ind\sigma'_{n'}
\end{equation*}
or $\sigma_n$ and $\sigma'_{n'}$ are simultaneously indeterminate
$N$-extremal. To conclude the proof, one applies
Lemma~\ref{lem:stability-index-cardinality} to
\eqref{eq:sigma-sigma-prime} and \eqref{eq:f-f-prime}.
\end{proof}
\begin{remark}
  \label{rem:infinite-finite-index}
  Under the assumption that $S,\widetilde{S},\gamma$ are fixed, if
  $\rho$ in the previous theorem is such that $\ind\rho<+\infty$, then
  the place of the perturbation $n$ is determined uniquely by
  $\ind\rho$. If $\ind\rho=\infty$, then there are several possible
  values of $n$. This happens, in particular, to the inverse problem
  for finite Jacobi matrices.
\end{remark}
\begin{remark}
  \label{rem:other-case-gamma}
  The inverse spectral problem for which $\gamma\in S$ is treated
  analogously.
\end{remark}
\subsection*{Acknowledgments}
We thank C. Berg, A. Dur\'an, and F. Marcell\'an for valuable remarks.

\def\cprime{$'$} \def\lfhook#1{\setbox0=\hbox{#1}{\ooalign{\hidewidth
  \lower1.5ex\hbox{'}\hidewidth\crcr\unhbox0}}} \def\cprime{$'$}

\end{document}